\pretocmd{\chapter}{\addtocontents{toc}{\protect\addvspace{15\p@}}}{}{}
\pretocmd{\section}{\addtocontents{toc}{\protect\addvspace{5\p@}}}{}{}
\let\oldtocsection=\tocsection
\let\oldtocsubsection=\tocsubsection
\let\oldtocsubsubsection=\tocsubsubsection
\renewcommand{\tocsection}[2]{\hspace{0em}\oldtocsection{#1}{#2}}
\renewcommand{\tocsubsection}[2]{\hspace{1.8em}\oldtocsubsection{#1}{#2}}
\renewcommand{\tocsubsubsection}[2]{\hspace{4.4em}\oldtocsubsubsection{#1}{#2}}
\definecolor{linkcolor}{HTML}{e88d67} 
\definecolor{citecolor}{HTML}{e88d67} 
\definecolor{urlcolor}{HTML}{e88d67} 
\definecolor{myNewColorA}{HTML}{fec3a6}
\definecolor{myNewColorB}{HTML}{ffaf80}
\definecolor{myNewColorC}{HTML}{fb8f67}
\definecolor{seagreen}{HTML}{337180}
\definecolor{mseagreen}{HTML}{369673}
\definecolor{darksalmon}{HTML}{e88d67}
\definecolor{silver}{HTML}{bbbbbb}
\definecolor{flowerblue}{HTML}{4e77fc}
\definecolor{tomato}{HTML}{ff6347}
\definecolor{orange}{HTML}{f2b13d}
\definecolor{darkgray}{HTML}{939393}
\DeclareMathOperator{\Mat}{Mat}
\newcommand{\brackets}[1]{\left( #1 \right)}
\newcommand{\PoissonBrackets}[1]{\left\{ #1 \right\}}
\newcommand{\LieBrackets}[1]{\left[ #1 \right]}
\newcommand{\angleBrackets}[1]{\langle #1 \rangle}
\newcommand{\quot}[2]{{\raisebox{.3em}{$#1\!$}\bigl/\raisebox{-.3em}{$\!#2$}}}
\newcommand{\Painleve}{Painlev{\'e} }
\theoremstyle{plain}
\newtheorem{thm}{Theorem}[]
\theoremstyle{definition}
\theoremstyle{remark}
\newtheorem{rem}{Remark}
\begin{document}

    \title[]{Affine Weyl groups and non-Abelian discrete systems: 
    \\
    an application to the $d$-Painlevé equations}

    \author{Irina Bobrova}
    \noindent\address{\noindent 
    Max-Planck-Institute for Mathematics in the Sciences, 04103, Leipzig, Germany
    }
    \email{ia.bobrova94@gmail.com}

    \subjclass{Primary 46L55. Secondary 39A05, 39A10} 
    \keywords{non-abelian discrete systems, dressing chain, Painlevé equations, Bäcklund transformations, affine Weyl groups}
    
    \maketitle
    
    \begin{abstract}
    A non-abelian generalisation of a birational representation of affine Weyl groups and their application to the discrete dynamical systems is presented. By using this generalisation, non-commutative analogs for the discrete systems of $A_n^{(1)}$, $n \geq 2$ type and of $d$-Painlevé equations with an additive dynamic were derived. A coalescence cascade of the later is also discussed. 
    \end{abstract}
    
    \tableofcontents
    
    \section{Introduction}

In the series of papers by K. Okamoto \cite{okam1}, \cite{okam2}, \cite{okam3}, \cite{okam4}, it was shown that the symmetries of the differential Painlevé equations form a group structure isomorphic to a certain affine Weyl group. These symmetries preserve a class of the equation, but change the parameters. Thanks to the affine type, one can introduce a translation operator that leads to a discrete dynamic of the additive type. Moreover, H. Sakai had showed that these symmetries are closely connected with rational surfaces and, as a result, classified discrete Painlevé equations according to a rational surface type. His classification contains 22 discrete Painlevé equations. These discrete systems are either of elliptic, multiplicative or additive type and are known as $ell$-, $q$- or $d$-Painlevé equations respectively. In fact, there is another approach to the discrete Painlevé equations that does not involve the geometric methods. It is based on the afﬁne Weyl groups and is developed by M. Noumi and Y. Yamada \cite{noumi1998affine}. The authors have defined discrete systems by using a birational representation of the afﬁne Weyl groups. Namely, a proper extension of this representation to the ﬁeld of rational functions defines dynamical variables and, thus, a discrete system, thanks to the translation operators. This approach has a straightforward generalisation to the non-commutative case, while the geometric method is needed to be developed. Here we present a non-commutative generalisation of the affine Weyl groups' application to the discrete systems and support it by a series of examples. Examples given in Section \ref{sec:Ansys} are non-commutative analogs of the discrete systems of $A_n^{(1)}$, $n \geq 2$ type (see \cite{noumi1998affine}, \cite{noumi2000affine}). As in the commutative case, generators of the corresponding birational representation are Bäcklund transformations of the dressing chain in the Noumi-Yamada variables also known as a higher order analog of the fourth and fifth Painlevé equations. Section \ref{sec:dP} collects examples, which might be regarded as non-abelian versions for $d$-Painlevé equations. We leave their study in the geometric framework developed by H. Sakai for the forthcoming papers. 

Let us briefly discuss the main idea of the paper \cite{noumi1998affine}. The Painlevé equations, differential and difference, have symmetries that form an affine Weyl group. Generators $s_i$, $i \in I := \{ 0, 1, \dots, n\}$ of this group act on the corresponding root lattice $Q = \mathbb{Z}\PoissonBrackets{\alpha_0, \dots, \alpha_n}$ by automorphisms of the field $\mathbb{C} (\alpha)$ of rational functions in $\alpha_i$, $i \in I$. Let us consider a set of ``variables'' $f_i$, $i \in I$ and an extension of the field $\mathbb{C}(\alpha)$ to the field $\mathbb{C} (\alpha, f) = \mathbb{C}(\alpha) (f_i, \, i \in I)$ of rational functions in $\alpha_i$ and $f_i$. We assume that a specification of the $s_i$-action on $f_j$ preserve the Weyl group structure for any $i, j \in I$. Then, defining a translation element $t_{\mu} \in W$ and a set of rational functions $F_{\mu, i} (\alpha, f) \in \mathbb{C} (\alpha, f)$, one can obtain a discrete dynamical system, where $\alpha_i$ are discrete time variables and $f_i$ are depended variables. Therefore, a suitable birational representation of the affine Weyl group leads to a discrete dynamical system. Note that classes of such representations arise naturally from Bäcklund transformations of ordinary differential equations, in particular, of the Painlevé equations.

The differential Painlevé equations posses different non-abelian analogs. The matrix Hamiltonian analogs had been derived by H. Kawakami \cite{Kawakami_2015}. Then, his list was extended by several papers. Some of them contain examples of analogs with non-commutative parameters \cite{Adler_Sokolov_2020_1}, \cite{Bobrova_2022}, \cite{bobrova2022classification}, while a full classification of systems with abelian parameters is done in \cite{bobrova2023classification}. In fact, some non-abelian analogs of the difference Painlevé equations have already appeared (see, e.g., \cite{cassatella2014singularity}, \cite{adler2020}). In \cite{bobrova2022fully}, it had been shown that an analog of the fourth Painlevé equation admits Bäcklund transformations forming an affine Weyl group structure similar to the commutative case. Later, in \cite{bershtein2023hamiltonian}, the authors derived Bäcklund transformations for matrix Hamiltonian Painlevé systems of all types and showed that they form affine Weyl groups as in the classical case. Our study of discrete non-abelian $d$-Painlevé equations has been inspired by the latter paper. Now let us present non-commutative setting and describe main results of the current article. 

\medskip
Since we are working on non-abelian analogs of the ordinary differential and difference equations, we define an associative unital division ring $\mathcal{R}$ over the field $\mathbb{C}$ equipped with a derivation. We assume that all greek letters belong to the field $\mathbb{C}$, while the elements $f_i$ are from $\mathcal{R}$. We will often call $f_i$ as \textit{functions}. The derivation $d_t : \mathcal{R} \to \mathcal{R}$ of the ring $\mathcal{R}$ is a $\mathbb{C}$-linear map satisfying the Leibniz rule. We also assume that there is a central element $t$ such that $d_t(t) = 1$ and for any $\alpha \in \mathbb{C}$ we have $d_t(\alpha) = 0$. Here and below we identify the unit of the field with the unit of the ring. For the brevity we denote $d_t(f_i) = \dot{f}_i$, $d_t^2(f_i) = \Ddot{f}_i$, and so on. Note that on $\mathcal{R}$ we have an involution called the transposition $\tau$, which acts trivially on the generators of $\mathcal{R}$ and for any elements $F$, $G \in \mathcal{R}$ we have $\tau(F \, G) = \tau(G) \, \tau(F)$. This involution can be naturally extended to the matrices over $\mathcal{R}$. We would rather not specify the generators of the ring $\mathcal{R}$ in order to avoid overloaded description of a pretty simple thing. Instead, we encourage the reader to think of the ring $\mathcal{R}$ as a generalization of rational functions over the field $\mathbb{C}$ to a non-abelian case. 

Now it is clear how to generalise the construction from \cite{noumi1998affine} to the non-abelian case. 
We just consider $f_i \in \mathcal{R}$ and repeat the same arguments as above. More accurate and explicit description can be found in Section~\ref{sec:affw_dsys}. It gives us a non-abelian version of the discrete dynamical systems, an application of which we present in Sections \ref{sec:Ansys} and \ref{sec:dP}. 

Section \ref{sec:Ansys} is devoted to the introduction and study of non-abelian discrete systems of type $A_n^{(1)}$, $n \geq 2$, whose commutative versions have been discussed in the papers \cite{noumi1998affine}, \cite{noumi2000affine}. According to Section \ref{sec:affw_dsys}, we define a birational representation of the extended affine Weyl group of type $A_n^{(1)}$ (see Theorem \ref{thm:Aldsys}), the corresponding translation operators of which define discrete dynamics of $d$-type. Note that this birational representation is closely connected with the dressing chain \cite{veselov1993}. Namely, the dressing chain might be rewritten in the Noumi-Yamada variables \cite{takasaki2003spectral} and, thus, the birational representation gives us Bäcklund transformations of the system. Moreover, the systems of $A_n^{(1)}$, $n \geq 2$ type are higher order generalisations for the fourth and fifth Painlevé equations. We have derived a non-commutative analog of the dressing chain in the Noumi-Yamada variables (see Theorem \ref{thm:Ansys}) and present its Lax pair (see Theorem \ref{thm:Anpair}). Our results are generalisations of the quantum $A_n^{(1)}$, $n \geq 2$ type systems studied in \cite{nagoya2004quantum}. Unlike the paper \cite{nagoya2004quantum}, we do not assume any algebraic relations for the variables and present an explicit form of the Lax pair. 

Further natural examples are related to the Painlevé equations and are discussed in Section \ref{sec:dP}. As we have already mentioned, Bäcklund transformations of the differential Painlevé equations form an affine Weyl group and, as a result, define discrete dynamical systems, since a specification of $s_i$-actions appears naturally from the symmetries. In the paper \cite{bershtein2023hamiltonian}, the authors had presented Bäcklund transformations for non-abelian analogs of the differential Painlevé equations. By using them, we obtained non-commutative analogs of all $d$-Painlevé equations listed in Section 7 from the paper \cite{sakai2001rational} (see also Section 8 in \cite{kajiwara2017geometric}). 
\begin{minipage}{.4\textwidth}
\vspace{0.5mm}
Our list of non-abelian $d$-Painlevé equations is given in Appendix \ref{app:dP}, while their detailed discussions are in Section~\ref{sec:dP}. These equations are connected with each other by a degeneration procedure (see degeneration data in Appendix \ref{app:deg_data}). 
A coalescence cascade coincides with the well-known one (see \cite{sakai2001rational}).
\end{minipage}
\begin{minipage}{0.6\textwidth}
\vspace{-3mm}
\begin{figure}[H]
    \centering
    \scalebox{1.}{\tikzset{every picture/.style={line width=0.75pt}} 

\begin{tikzpicture}[x=0.75pt,y=0.75pt,yscale=-1,xscale=1]

\draw    (290.63,69.79) -- (320.13,59.01) ;
\draw [shift={(322.01,58.33)}, rotate = 159.93] [color={rgb, 255:red, 0; green, 0; blue, 0 }  ][line width=0.75]    (10.93,-3.29) .. controls (6.95,-1.4) and (3.31,-0.3) .. (0,0) .. controls (3.31,0.3) and (6.95,1.4) .. (10.93,3.29)   ;
\draw    (387.63,117.79) -- (416.34,104.4) ;
\draw [shift={(418.15,103.56)}, rotate = 155] [color={rgb, 255:red, 0; green, 0; blue, 0 }  ][line width=0.75]    (10.93,-3.29) .. controls (6.95,-1.4) and (3.31,-0.3) .. (0,0) .. controls (3.31,0.3) and (6.95,1.4) .. (10.93,3.29)   ;
\draw    (290.63,100.79) -- (320.05,115.05) ;
\draw [shift={(321.85,115.92)}, rotate = 205.86] [color={rgb, 255:red, 0; green, 0; blue, 0 }  ][line width=0.75]    (10.93,-3.29) .. controls (6.95,-1.4) and (3.31,-0.3) .. (0,0) .. controls (3.31,0.3) and (6.95,1.4) .. (10.93,3.29)   ;
\draw    (388.63,57.79) -- (418.05,72.05) ;
\draw [shift={(419.85,72.92)}, rotate = 205.86] [color={rgb, 255:red, 0; green, 0; blue, 0 }  ][line width=0.75]    (10.93,-3.29) .. controls (6.95,-1.4) and (3.31,-0.3) .. (0,0) .. controls (3.31,0.3) and (6.95,1.4) .. (10.93,3.29)   ;
\draw    (389.78,50.63) -- (417.01,50.63) ;
\draw [shift={(419.01,50.63)}, rotate = 180] [color={rgb, 255:red, 0; green, 0; blue, 0 }  ][line width=0.75]    (10.93,-3.29) .. controls (6.95,-1.4) and (3.31,-0.3) .. (0,0) .. controls (3.31,0.3) and (6.95,1.4) .. (10.93,3.29)   ;
\draw    (487.63,57.99) -- (517.05,72.25) ;
\draw [shift={(518.85,73.12)}, rotate = 205.86] [color={rgb, 255:red, 0; green, 0; blue, 0 }  ][line width=0.75]    (10.93,-3.29) .. controls (6.95,-1.4) and (3.31,-0.3) .. (0,0) .. controls (3.31,0.3) and (6.95,1.4) .. (10.93,3.29)   ;
\draw    (488.78,50.83) -- (516.01,50.83) ;
\draw [shift={(518.01,50.83)}, rotate = 180] [color={rgb, 255:red, 0; green, 0; blue, 0 }  ][line width=0.75]    (10.93,-3.29) .. controls (6.95,-1.4) and (3.31,-0.3) .. (0,0) .. controls (3.31,0.3) and (6.95,1.4) .. (10.93,3.29)   ;
\draw    (488.8,84.58) -- (516.03,84.58) ;
\draw [shift={(518.03,84.58)}, rotate = 180] [color={rgb, 255:red, 0; green, 0; blue, 0 }  ][line width=0.75]    (10.93,-3.29) .. controls (6.95,-1.4) and (3.31,-0.3) .. (0,0) .. controls (3.31,0.3) and (6.95,1.4) .. (10.93,3.29)   ;

\draw (329.13,43.37) node [anchor=north west][inner sep=0.75pt]    {\ref{eq:dPD5'}};
\draw (331.1,113.73) node [anchor=north west][inner sep=0.75pt]    {\ref{eq:dPD5}};
\draw (428.1,78.13) node [anchor=north west][inner sep=0.75pt]    {\ref{eq:dPE6}};
\draw (428.1,43.4) node [anchor=north west][inner sep=0.75pt]    {\ref{eq:dPD6}};
\draw (236.35,78.12) node [anchor=north west][inner sep=0.75pt]    {\ref{eq:dPD4}};
\draw (527.1,78.33) node [anchor=north west][inner sep=0.75pt]    {\ref{eq:dPE7}};
\draw (527.1,43.6) node [anchor=north west][inner sep=0.75pt]    {\ref{eq:dPD7}};

\end{tikzpicture}}
\end{figure}
\end{minipage}
\\[0.5mm]
Moreover, these equations are expected to be integrable in the sense of the Lax pairs and to be Hamiltonian (see Remarks \ref{rem:lax} and \ref{rem:ham}). We leave these problems and a study of non-abelian analogs of the $q$-Painlevé equations for the further research.

\begin{rem}
\label{rem:lax}
Regarding the \ref{eq:dPE7} system with the central element $t$, 
\begin{gather}
    \tag*{\ref{eq:dPE7}}
    \begin{gathered}
    \begin{aligned}
        \bar \alpha_0 
        = \alpha_0 - 1,&
        &&&&&
        &\bar \alpha_1 
        = \alpha_1 + 1,
        \\
        \bar q
        + q 
        = - \alpha_1 p^{-1},&
        &&&&&
        &\bar p
        + p 
        = t + 2 {\bar{q}}^2,
    \end{aligned}
    \end{gathered}
\end{gather}
its discrete isomonodromic Lax pair is given by the matrices $\mathcal{A}_n = \mathcal{A}_n (\lambda)$, $\mathcal{B}_n = \mathcal{B}_n (\lambda)$ depending on the commutative spectral parameter $\lambda$, i.e. $\lambda \in \mathcal{Z} (\mathcal{R})$:
\begin{align}
    \mathcal{A}_n
    &= 
    \begin{pmatrix}
        1 & 0 \\[0.9mm]
        0 & -1
    \end{pmatrix}
    \lambda^2
    + 
    \begin{pmatrix}
        0 & 1 \\[0.9mm]
        2 p & 0
    \end{pmatrix}
    \lambda
    + 
    \begin{pmatrix}
        - p + \tfrac12 t & - q \\[0.9mm]
        2 p q + 2 \alpha_1 & p - \tfrac12 t
    \end{pmatrix}
    ,
    &
    \mathcal{B}_n
    &= 
    \begin{pmatrix}
        - 2 & 0 \\[0.9mm] 0 & 0
    \end{pmatrix}
    \lambda
    + 
    \begin{pmatrix}
        - 2 q & - 1 \\[0.9mm] - 2 \bar p & 0
    \end{pmatrix}
\end{align}
that are a non-commutative generalisation of those presented in \cite{grammaticos1998degeneration} (see Appendix A.9 therein). 
One~can verify that the compatibility condition of the discrete linear problem for the function $Y_n = Y_n (\lambda)$
\begin{align}
    &&
    &\left\{
    \begin{array}{lcl}
         \partial_{\lambda} Y_{n}
         &= \mathcal{A}_n \, Y_n,  
         \\[3mm]
         Y_{n + 1}
         &= \mathcal{B}_n \, Y_n
    \end{array}
    \right.
    &
    \Rightarrow&
    &
    \partial_{\lambda} \mathcal{B}_{n}
    = \mathcal{A}_{n + 1} \, \mathcal{B}_n
    - \mathcal{B}_n \, \mathcal{A}_n
    &&
\end{align}
is equivalent to the \ref{eq:dPE7}, since the commutator $[p, q]$ is invariant under the map
\begin{align}
    &&
    \psi
    &: \mathcal{R}^2 \to \mathcal{R}^2,
    &
    (q, p)
    &\mapsto \brackets{\bar q, \bar p}
    = \brackets{
    - p + t + 2 q^2, \,
    - q - \bar \alpha_1 (- p + t + 2 q^2)^{-1}
    }
    .
    &&
\end{align}
Once $t \in \mathcal{R}$, the commutator $[p, q]$ is no longer a conserved quantity. The latter fact might have been caused by the Hamiltonian structure similarly to the non-abelian Hamiltonian ODEs (see Lemma 1 in \cite{bobrova2022classification} and its generalisation, Lemma 2.1, in \cite{bobrova2023equations}). 
\end{rem}

\begin{rem}
\label{rem:ham}
Derived systems from Appendix \ref{app:dP} are expected to be Hamiltonian. Following the discrete Hamiltonian setting described in the paper \cite{mase2020discrete}, one can introduce its non-commutative analog as follows. Let us consider the non-commutative partial derivatives as in Section 2.1.2 from \cite{bobrova2023equations}, which allow us to work on non-autonomous Hamiltonian systems. Then, a discrete system in $q$, $p$ variables is Hamiltonian if there exists a function $H = H (q, \bar p)$ such that the system can be rewritten in the form
\begin{align}
    \label{eq:dhamsys}
    &&
    p
    &= \partial_q H,
    &
    \bar q
    &= \partial_{\bar p} H.
    &&
\end{align}
For the \ref{eq:dPE7} system, a Hamiltonian is $H = - q \, \bar p + t q + \tfrac13 q^3 - \bar \alpha_1 \, \ln \bar p$, where for the symbol $\ln f$ we define the right logarithmic derivative by $d_t (\ln f) := f^{-1} \, \dot{f}$\footnote{Its left analog can be easily derived by using the $\tau$-action.}. Then, the non-abelian derivatives are
\begin{align}
    &&
    \partial_q H
    &= - \bar p + q^2 + t,
    &
    \partial_{\bar p} H
    &= - q - \bar \alpha_1 \, \bar p^{-1},
    &
    \partial_t H
    &= q
    &&
\end{align}
and \eqref{eq:dhamsys} is equivalent to the \ref{eq:dPE7} system.
\end{rem}

\subsection*{Acknowledgements}
The author is deeply grateful to N. Safonkin for useful remarks. 

\section{Affine Weyl groups and discrete dynamics}
\label{sec:affw_dsys}

Here we will briefly discuss necessary info related to the affine Weyl groups and discrete dynamics generated by the translations. We will follow the paper \cite{noumi1998affine}, the main idea of which is that an extended birational representation of the affine Weyl group leads to a discrete system. We will transfer this idea to the non-abelian setting. Some foundations of the commutative theory can be found in \cite{kac1990infinite} (see also \cite{shi2022translations}), while its application to the Painlevé equations might be found in \cite{joshi2019discrete}. Non-abelian setting is already discussed in the Introduction. 

\begin{rem}
We would like to mention the paper \cite{berenstein2008lie}, that is close to our subject and where the authors had introduced and studied Lie algebras and Lie groups over non-commutative rings.
\end{rem}

\medskip
Let us fix a generalized Cartan matrix $C = (c_{ij})$, where $i, j \in I := \PoissonBrackets{0, 1, \dots, n}$, i.e. it is of affine type. 
Sets $\Delta = \{\alpha_0, \dots, \alpha_n \}$, $\Delta^\vee = \{\alpha_0^\vee, \dots, \alpha_n^\vee \}$ correspond to simple roots and simple co-roots associated with the matrix $C$, where $\alpha_0$ and $\alpha_0^\vee$ are simple affine root and co-root respectively. The sets $\Delta$ and $\Delta^\vee$ form a basis for the vector spaces $V$ and $V^*$ respectively. Denote by $Q = Q(C)$ and $Q^\vee = Q^\vee(C)$ the root and co-root lattices
\begin{align}
    &&
    Q
    &= \mathbb{Z} \, \Delta,
    &
    Q^\vee
    &= \mathbb{Z} \, \Delta^\vee.
    &&
\end{align}
Recall that the pairing $\angleBrackets{\, \cdot \, , \cdot \,}: Q \times Q^\vee \to \mathbb{Z}$ is defined by $\angleBrackets{\alpha_i, \alpha_j^\vee} = c_{ij}$ and $\alpha_i^\vee = 2 \alpha_i / (\alpha_i, \alpha_i)$. 

Denote by $W = W(C)$ the Weyl group (or the Coxeter group) defined by generators $s_i$, $i \in I$:
\begin{align}
    W(C)
    &= \angleBrackets{
    s_0, s_1, \dots, s_n \, 
    \big| \, 
    s_i^2
    = 1, \,
    (s_i \, s_j)^{m_{ij}}
    = 1
    },
\end{align}
where the exponents are determined by the value of the product $c_{ij} c_{ji}$ as below
\vspace{-2mm}
\begin{table}[H]
    \centering
    \begin{tabular}{c||ccccc}
         $c_{ij} c_{ji}$
         & 0
         & 1
         & 2
         & 3
         & $\geq 4$
         \\
         \hline\hline
         $m_{ij}$
         & 2 
         & 3
         & 4
         & 6
         & $\infty$
    \end{tabular}
\end{table}
\vspace{-2mm}
These generators act naturally on $Q$ by reflections
\begin{align}
    \label{eq:refl}
    s_i (\alpha_j)
    &= \alpha_j - \angleBrackets{\alpha_i, \alpha_j^\vee} \, \alpha_i
    = \alpha_j - c_{ij} \, \alpha_i.
\end{align}

Note that each $s_i$-action on the lattice $Q$ induces an automorphism of the field $\mathbb{C} (\alpha) = \mathbb{C} (\alpha_i, \, i \in I)$ of rational functions in $\alpha_i$. Hence, $\mathbb{C}(\alpha)$ is a left $W$-module. 

Now we are going to introduce new set of ``variables'' and, as a result, define an extension of the field $\mathbb{C}(\alpha)$. Let us consider the set of elements $f_i \in \mathcal{R}$, $i \in I$, which we will often call either \textit{functions} or \textit{variables}. We propose an extension of the representation of $W$ on $\mathbb{C}(\alpha)$ to the field $\mathbb{C} (\alpha, f) = \mathbb{C}(\alpha) (f_i, \, i \in I)$ of rational functions in $\alpha_i$ and $f_i$, $i \in I$. One needs to specify the action of $s_i$ on $f_j$ in such a way that the automorphisms $s_i$ on $\mathbb{C} (\alpha, f)$ are involutions and satisfy the braid relations, i.e. they must preserve the Weyl group structure. 

\begin{rem}
\label{rem:BT_dP}
Such classes of representations arise naturally from Bäcklund transformations of non-abelian analogs of the differential Painlevé equations. We will consider them in details in Section \ref{sec:dP}. 
\end{rem}

\begin{rem}
Sometimes it is necessary to work with an extended Weyl group $\widetilde W$. According to the definition, it is a semi-direct product of the Weyl group and the group $\Omega$ of automorphisms of the Dynkin diagram $\Gamma(C)$, i.e. $\widetilde W = W \rtimes \Omega$. Recall that an automorphism of $\Gamma(C)$ is a bijection $\pi$ on $I$ such that $c_{\pi(i) \pi(j)} = c_{ij}$. Hence, the commutative relations are given by $\pi \, s_i = s_{\pi(i)} \, \pi$. The representations of $W$ lifts to a representation of~$\widetilde{W}$.
\end{rem}

Recall that one of the important property of the affine Weyl groups is that they have translations, also known as \textit{Kac translations}. Let $W_0$ be a finite Weyl group, $\delta = \sum_{i \in I} k_i \, \alpha_i$ be the \textit{null root} and $V_0 = \PoissonBrackets{\mu \in V \, \big| \, \angleBrackets{\mu, \delta^\vee} = 0}$. For an element $\mu \in V_0$ such that $\angleBrackets{\mu, \mu^\vee} \neq 0$ we define a \textit{translation element} $t_{\mu} \in W$ by the formula
\begin{align}
    \label{eq:tr_el}
    t_{\mu}
    &= s_{\delta - \mu} \, s_{\mu}
\end{align}
and suppose that $w \, t_{\mu} = t_{w(\mu)} \, w$ for any $w \in W$. Note that, in particular, $t_{\alpha} t_{\beta} = t_{\alpha + \beta}$ and then $t_{\alpha} t_{\beta} = t_{\beta} t_{\alpha}$. This operator acts on simple affine roots as follows
\begin{align}
    \label{eq:shift}
    t_{\mu} (\alpha)
    &= \alpha
    - \angleBrackets{\mu, \alpha} \, \delta
    = \alpha - \mu_{\alpha} \delta
    .
\end{align}
It is known that the affine Weyl group is decomposed into a semi-direct product of translations in the \textit{lattice part} $M$ and the finite Weyl group $W_0$ acting on $M$, i.e. $W = M \rtimes W_0$. The lattice part $M$ acts on $\mathbb{C}(\alpha)$ as a shift operator, thanks to \eqref{eq:shift}. Since the null root $\delta$ is $W$-invariant, it is convenient to set it to be a nonzero constant. Therefore, it represents the scaling of the lattice $M$. 

Turning to the discrete systems, suppose that we extended the action of $W$ from $\mathbb{C}(\alpha)$ to $\mathbb{C}(\alpha, f)$. Here we consider an arbitrary extension $\mathbb{C}(\alpha, f)$ as a left $W$-module, assuming that each element of $W$ acts on the function ﬁeld as an automorphism. For each $\mu \in M$ we define a set of rational functions $F_{\mu, i} (\alpha, f) \in \mathbb{C}(\alpha, f)$~by
\begin{align}
    \label{eq:dys}
    t_{\mu} (f_i)
    &= F_{\mu, i} (\alpha, f).
\end{align}
This set can be considered as a \textit{discrete dynamical system}. Here $\alpha_i$ and $f_i$ are discrete time variables and the depended variables respectively. Based on the action of $t_\mu$ on the discrete time variables, discrete dynamics can be classified into additive ($d$-equations), multiplicative ($q$-equations), or elliptic ($ell$-equations) types. We~remark that a similar description of the discrete dynamics can be given for the extended affine Weyl group $\widetilde W$ as well. 

\begin{rem}
Up to the author's knowledge, examples of non-abelian discrete systems of $ell$-type have not appeared yet. However, systems of $q$-type might be found in \cite{bobrova2023non}, where the non-commutative analogs of the $q$-P$_1$ and $q$-P$_2$ hierarchies are presented. Moreover, examples of non-abelian discrete $d$-systems can be found, for instance, in \cite{cassatella2014singularity} and \cite{adler2020}.
\end{rem}

\section{Systems of \texorpdfstring{$A_{n}^{(1)}$, $n \geq 2$}{A(n), n >= 2} type}
\label{sec:Ansys}
Let us present a non-abelian version of an important example that is closely connected with the dressing chains \cite{veselov1993}. Furthermore, this example is related to the Painlevé equations \cite{adler1994nonlinear} and arises from a generalisation of the symmetries for the P$_4$ and P$_5$ systems \cite{noumi2000affine}. All results of this section might have been considered as a generalisation of some results of the papers \cite{noumi2000affine} and \cite{nagoya2004quantum}.

According to the previous section, one needs to define a representation of the affine Weyl group. Let the Cartan matrix $C$ be of type $A_n^{(1)}$, $ n \geq 2$ and $I = \{0, 1, \dots, n\}$. Note that the matrix $C$ is as below and consider an $(n + 1) \times (n + 1)$-matrix $U$:

\begin{align}
    C
    &= (c_{i, j})
    = 
    \begin{pmatrix}
        2 & - 1 & 0 & \dots & 0 & -1 \\[0.9mm]
        -1 & 2 & -1 & \dots & 0 & 0 \\[0.9mm]
        0 & -1 & 2 & \dots & 0 & 0 \\[0.9mm]
        \vdots & \vdots & \vdots & \ddots & \vdots & \vdots \\[0.9mm]
        0 & 0 & 0 & \dots & 2 & -1 \\[0.9mm]
        -1 & 0 & 0 & \dots & -1 & 2
    \end{pmatrix},
    &
    U
    &= (u_{i, j})
    = 
    \begin{pmatrix}
        0 & 1 & 0 & \dots & 0 & -1 \\[0.9mm]
        -1 & 0 & 1 & \dots & 0 & 0 \\[0.9mm]
        0 & -1 & 0 & \dots & 0 & 0 \\[0.9mm]
        \vdots & \vdots & \vdots & \ddots & \vdots & \vdots \\[0.9mm]
        0 & 0 & 0 & \dots & 0 & 1 \\[0.9mm]
        1 & 0 & 0 & \dots & -1 & 0
    \end{pmatrix}
    .
\end{align}
The matrix $U$ is helpful in order to define the $s$-actions on the $f$-variables, i.e. we set
\begin{align}
    s_i (f_j)
    &= f_j + u_{i, j} \, \alpha_i \, f_i^{-1}.
\end{align}
It is easy to verify the following theorem that is a generalisation to the non-abelian case of its quantum version derived in the paper \cite{nagoya2004quantum}. Note that we do not assume any restrictions for the $f$-variables.

\begin{thm}
\label{thm:Aldsys}
Let us set
\begin{align}
    &&
    s_i(\alpha_i)
    &= - \alpha_i,
    &
    &&
    s_i (\alpha_j)
    &= \alpha_j + \alpha_i
    && 
    (j = i \pm 1),
    &&
    &
    &&
    s_i (\alpha_j)
    &= \alpha_j
    &&
    (j \neq i \pm 1),
    &&
    &&
    \\[1mm]
    &&
    s_i(f_i)
    &= f_i,
    &
    &&
    s_i (f_j)
    &= f_j \pm \alpha_i \, f_i^{-1}
    &&
    (j = i \pm 1),
    &&
    &
    &&
    s_i (f_j)
    &= f_j
    && 
    (j \neq i \pm 1),
    &&
    &&
    \\[1mm]
    &&
    \pi (\alpha_j)
    &= \alpha_{j + 1},
    &
    &&
    \pi (f_j)
    &= f_{j + 1},
    &
    j 
    &\in \quot{\mathbb{Z}}{(n + 1) \, \mathbb{Z}}
    .
\end{align}
The latter defines a birational representation of the extended affine Weyl group of type $A_n^{(1)}$, $n \geq 2$.
\end{thm}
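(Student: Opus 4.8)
The plan is to verify directly that the prescribed assignments satisfy the defining relations of the extended affine Weyl group $\widetilde W(A_n^{(1)}) = W(A_n^{(1)}) \rtimes \langle \pi\rangle$: namely $s_i^2 = 1$, the braid relations $(s_i s_j)^{m_{ij}} = 1$ with $m_{ij} = 3$ when $j \equiv i \pm 1 \pmod{n+1}$ and $m_{ij} = 2$ otherwise, together with $\pi^{n+1} = 1$ and $\pi s_i = s_{i+1}\pi$. Each such relation is an identity between words in the generators, so it suffices to check it on each of the finitely many generators $\alpha_j$ and $f_j$, $j\in I$. Birationality then comes for free: each $s_i$ and $\pi$ is assembled from the ring operations of $\mathcal{R}$ and the everywhere-defined inversion, and once $s_i^2 = 1$ is known each $s_i$ is its own inverse (while $\pi^{-1}$ is the backward rotation of indices).

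On the $\alpha$-variables there is nothing beyond the classical theory: the given action of $s_i$ is exactly the reflection action \eqref{eq:refl} on the root lattice of $A_n^{(1)}$ and $\pi$ acts as the rotation of the affine Dynkin diagram, so all the Weyl and mixed relations hold as in \cite{kac1990infinite}, \cite{noumi1998affine}. Hence the content concerns the $f$-variables. I would first record that $s_i$ moves only $f_{i-1}, f_i, f_{i+1}$ and the scalars $\alpha_{i-1},\alpha_i,\alpha_{i+1}$ (indices mod $n+1$), and that every correction term has the shape (central scalar)$\cdot f_i^{-1}$. With this, $s_i^2 = 1$ on the $f$'s is immediate; e.g. $s_i^2(f_{i+1}) = s_i\!\big(f_{i+1} + \alpha_i f_i^{-1}\big) = f_{i+1} + \alpha_i f_i^{-1} + s_i(\alpha_i)\, f_i^{-1} = f_{i+1}$ since $s_i(\alpha_i) = -\alpha_i$ and $s_i(f_i) = f_i$, and the case $j = i-1$ is the same with the opposite sign. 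For a non-adjacent pair $i,j$ the supports just described are either disjoint — giving $s_i s_j = s_j s_i$ at once — or overlap in at most two of the basic variables (this happens only for small $n$, in fact two overlapping variables occur exactly when $n=3$), and in the overlapping case a one-step substitution shows $s_i s_j(f_k) = s_j s_i(f_k)$, the two orders coinciding precisely because $\alpha_i$ and $\alpha_j$ are central.

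The substantive step — and the one I expect to be the main obstacle — is the braid relation $s_i s_{i+1} s_i = s_{i+1} s_i s_{i+1}$ for an adjacent pair $\{s_i,s_{i+1}\}$, evaluated on the $f$'s. The composite of the three reflections moves only $f_{i-1}, f_i, f_{i+1}, f_{i+2}$ (which collapse to three variables when $n = 2$), so I would check the identity on each of these. On $f_{i-1}$ and $f_{i+2}$ only one inner reflection acts nontrivially and the check is short (for $n=2$ these cases merge into the substantive ones and are handled below). The genuine computation is on $f_i$ and $f_{i+1}$, where one meets nested inverses of non-commuting elements such as $(f_{i+1} + \alpha_i f_i^{-1})^{-1}$. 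The point that rescues the argument — keeping it parallel to the commutative case of \cite{noumi2000affine} and the quantum case of \cite{nagoya2004quantum} — is again that the coupling constants $\alpha_i$ are central: using the elementary division-ring rearrangements $f_{i+1} + \alpha_i f_i^{-1} = (f_{i+1}f_i + \alpha_i)f_i^{-1} = f_i^{-1}(f_if_{i+1} + \alpha_i)$ and $f_i(f_{i+1}f_i + \alpha_i)^{-1} = (f_if_{i+1} + \alpha_i)^{-1} f_i$, one finds that both $s_i s_{i+1} s_i(f_i)$ and $s_{i+1} s_i s_{i+1}(f_i)$ collapse to the common value $f_i - (\alpha_i + \alpha_{i+1})\, f_i\,(f_{i+1}f_i + \alpha_i)^{-1}$, and the analogous reduction handles $f_{i+1}$. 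I expect this rearrangement — essentially a non-commutative resolvent identity — to be the only delicate point, and it is purely formal once the left/right placement of factors is tracked and centrality of the $\alpha$'s is used.

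Finally, for the diagram automorphism $\pi$ the cyclic structure of $C$ and $U$ (namely $c_{i+1,j+1} = c_{i,j}$ and $u_{i+1,j+1} = u_{i,j}$, indices mod $n+1$) makes everything transparent: $\pi$ simply shifts all indices by one, so $\pi^{n+1} = 1$, and comparing the defining formulas for $s_i$ and $s_{i+1}$ after the relabelling $j \mapsto j+1$ gives $\pi s_i \pi^{-1} = s_{i+1}$ on every $\alpha_j$ and $f_j$, i.e. $\pi s_i = s_{i+1}\pi$. Combining these verifications shows the assignments define a representation of $\widetilde W(A_n^{(1)})$ for $n \geq 2$, and, as noted at the outset, it is birational, which is what the theorem asserts.
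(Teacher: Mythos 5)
Your proposal is correct and follows exactly the route the paper intends: the paper offers no written proof, asserting only that the theorem "is easy to verify" by direct computation, and your verification supplies precisely those details. In particular, your key identity checks out — both $s_i s_{i+1} s_i(f_i)$ and $s_{i+1} s_i s_{i+1}(f_i)$ do reduce, via the rearrangement $f_{i+1}+\alpha_i f_i^{-1}=(f_{i+1}f_i+\alpha_i)f_i^{-1}$ and centrality of the $\alpha$'s, to the common value $f_i-(\alpha_i+\alpha_{i+1})\,f_i\,(f_{i+1}f_i+\alpha_i)^{-1}$.
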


\begin{rem}
When $n = 1$, such a representation is related to Bäcklund transformations of the second Painlevé equation. Its non-abelian extension can be found in Subsection \ref{sec:dPE7}. 
\end{rem}

Note that the shift operators are given by
\begin{align}
    &&
    T_1
    &= \pi \, s_n \, s_{n - 1} \, \dots \, s_1,
    &
    T_2
    &= s_1 \, \pi \, s_n \, \dots \, s_2,
    &
    \dots,&
    &
    T_{n + 1}
    &= s_n \, \dots \, s_1 \, \pi
    &&
\end{align}
and satisfy the relation  
\begin{align}
    T_1 \,\, T_2 \,\, \dots \,\, T_{n + 1} 
    &= 1.
\end{align}
Thus, any $n$ of them form a basis for the lattice. This representation for $n = 2$ and $n = 3$ in terms of other variables leading to the Hamiltonian form are discussed in Subsections \ref{sec:dPE6} and \ref{sec:dPD5} respectively. 

\medskip
Recall that Theorem \ref{thm:Aldsys} defines Bäcklund transformations for higher order symmetric forms of the \ref{eq:P40sys} and \ref{eq:P5sys} systems respectively. In the commutative case, these systems are closely connected with the dressing chains introduced in \cite{veselov1993} (see also \cite{takasaki2003spectral}). Thanks to the previous theorem, we have a generalisation of these systems to the non-abelian case. 

\begin{thm}
\label{thm:Ansys}
Let $j \in \quot{\mathbb{Z}}{(n + 1)\mathbb{Z}}$. Consider the systems for $n = 2 l$ and $n = 2 l + 1$
\begin{align}
    \label{eq:A2l_nc}
    \tag*{$A_{2l}^{(1)}$}
    &\begin{aligned}
    \dot f_j
    &= \sum_{1 \leq r \leq l} f_j \, f_{j + 2 r - 1}
    - \sum_{1 \leq r \leq l} f_{j + 2 r} \, f_j
    + \alpha_j;
    \end{aligned}
    \\[2mm]
    \label{eq:A2l1_nc}
    \tag*{$A_{2l + 1}^{(1)}$}
    &\begin{aligned}
    \tfrac12 \, t \, \dot f_j
    = \sum_{1 \leq r \leq s \leq l} f_j \, f_{j + 2 r - 1} \, f_{j + 2 s}
    &- \sum_{1 \leq r \leq s \leq l} f_{j + 2 r} \, f_{j + 2 s + 1} \, f_j
    \\[1mm]
    &+ \, \Big({
    \tfrac12 - \sum_{1 \leq r \leq l} \alpha_{j + 2 r}
    }\Big) f_j
    + \alpha_j \sum_{1 \leq r \leq l} f_{j + 2 r}
    .
    \end{aligned}
\end{align}
Then transformations given in Theorem {\rm\ref{thm:Aldsys}} are Bäcklund transformations of the \ref{eq:A2l_nc} and \ref{eq:A2l1_nc} systems.
\end{thm}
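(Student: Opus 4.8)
The statement to establish is that the reflections $s_i$, $i\in I$, and the diagram rotation $\pi$ furnished by Theorem~\ref{thm:Aldsys} are B\"acklund transformations of \ref{eq:A2l_nc} and \ref{eq:A2l1_nc}, i.e.\ each of them carries a solution of the system to a solution of the same system with the parameters replaced by their images. The plan is a direct verification organised by the group structure. Since $\pi\,s_i=s_{i+1}\,\pi$, conjugating by powers of $\pi$ reduces the whole statement to checking covariance under $\pi$ and under a single reflection, which by the cyclic symmetry of the equations may be taken to be $s_0$.

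Covariance under $\pi$ is immediate: $\pi$ fixes the central element $t$ and permutes the $f_i$ cyclically, so it commutes with $d_t$; and applying $\pi$ to the right-hand side of the $j$-th equation, using $\pi(\alpha_j)=\alpha_{j+1}$, $\pi(f_j)=f_{j+1}$, reproduces termwise the right-hand side of the $(j+1)$-st equation, so $\pi$ sends the $j$-th equation to the $(j+1)$-st. For $s_0$ recall that $s_0$ fixes $f_0$ and all $f_j$ with $j\notin\{0,1,n\}$, sends $f_1\mapsto f_1+\alpha_0 f_0^{-1}$ and $f_n\mapsto f_n-\alpha_0 f_0^{-1}$, and acts on parameters by $\alpha_0\mapsto-\alpha_0$, $\alpha_1\mapsto\alpha_1+\alpha_0$, $\alpha_n\mapsto\alpha_n+\alpha_0$, fixing the rest. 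I would split the check into the ``bulk'' equations, those with index $j\notin\{0,1,n\}$, and the three ``boundary'' equations $j\in\{0,1,n\}$.

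In the bulk the left-hand side of the $j$-th equation is unchanged, and for \ref{eq:A2l_nc} so is its parameter $\alpha_j$, so one only needs the right-hand side to be $s_0$-invariant; the only monomials that change are those containing $f_1$ or $f_n$, and one shows that their $\pm\alpha_0 f_0^{-1}$-corrections cancel. The crucial combinatorial input is that, because $n+1$ is odd, for each bulk index $j$ the offsets $1-j$ and $n-j\equiv -1-j$ have representatives in $\{1,\dots,n\}$ of the same parity, so that $f_1$ and $f_n$ enter in structurally matching positions — for \ref{eq:A2l_nc} simply on the same side of $f_j$ — with opposite signs; the residual terms are absorbed using the index wrap-arounds (for instance $f_{j+2r-1}=f_0$ for the relevant $r$, which turns a leftover $f_j f_0 f_0^{-1}$ into $\alpha_0 f_j$), and for \ref{eq:A2l1_nc} also using the shifts of those $\alpha_{j+2r}$ or $\alpha_j$ that happen to equal $\alpha_0$, $\alpha_1$ or $\alpha_n$.

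For the boundary equations the verification is a direct computation. The $j=0$ equation is seen to be $s_0$-invariant once the shift of its additive term is cancelled against the corrections produced by $s_0(f_1)$ and $s_0(f_n)$ (e.g.\ for \ref{eq:A2l_nc}, $f_0\,s_0(f_1)=f_0 f_1+\alpha_0$ and $-s_0(f_n)\,f_0=-f_n f_0+\alpha_0$ exactly compensate $\alpha_0\mapsto-\alpha_0$). For $j=1$ and $j=n$ one differentiates $s_0(f_1)=f_1+\alpha_0 f_0^{-1}$, resp.\ $s_0(f_n)=f_n-\alpha_0 f_0^{-1}$, via $d_t(f_0^{-1})=-f_0^{-1}\dot f_0 f_0^{-1}$, substitutes $\dot f_0$ — resp.\ $\tfrac12 t\,\dot f_0$, the central factor $t$ being moved freely — from the $0$-th equation of the system, and matches the result against the right-hand side of the $1$st, resp.\ $n$-th, equation evaluated at $\bigl(s_0(f),s_0(\alpha)\bigr)$; the shifts $s_0(\alpha_1)=\alpha_1+\alpha_0$ and $s_0(\alpha_n)=\alpha_n+\alpha_0$ are exactly what absorbs the surviving terms. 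I expect the main difficulty to lie in this last step for the cubic system \ref{eq:A2l1_nc}: the right-hand sides expand into many monomials and one must keep track of the non-commutative orderings, of several index wrap-arounds, and of the parameter shifts simultaneously, although nothing beyond the single substitution of $\dot f_0$ and the explicit $s_0$-action on the $\alpha_i$ is required. A conceptual alternative — constructing the non-abelian dressing chain of \cite{veselov1993}, establishing its B\"acklund transformations in the dressing-chain variables, and transporting them to the Noumi--Yamada variables via Theorem~\ref{thm:Aldsys} along the lines of \cite{takasaki2003spectral}, \cite{noumi2000affine} — ultimately reduces to the same computation.
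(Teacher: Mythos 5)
Your proposal is correct and follows the same route as the paper, whose entire proof of this theorem is the single sentence ``Can be done just by a direct computation.'' Your organisation of that computation — reducing via $\pi s_i=s_{i+1}\pi$ to the single reflection $s_0$, then splitting into bulk and boundary indices and tracking the $\pm\alpha_0 f_0^{-1}$ cancellations — is sound (e.g.\ the bulk offsets $2l+2-j$ and $2l-j$ indeed differ by $2$ without wrap-around, and the $j=0$ check $f_0\,s_0(f_1)-s_0(f_n)\,f_0=f_0f_1-f_nf_0+2\alpha_0$ compensates $\alpha_0\mapsto-\alpha_0$ exactly as you say) and is strictly more informative than what the paper records.
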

\begin{proof}
Can be done just by a direct computation. 
\end{proof}

\begin{rem}
When $l = 1$, the systems reduce to the \ref{eq:P40sys} and \ref{eq:P5sys} systems written in the $q$, $p$ variables. Similar to the commutative case, the order of the \ref{eq:A2l_nc} and \ref{eq:A2l1_nc} systems can be decreased thanks to the first integrals
\begin{align}
    &&&&
    \sum_{0 \leq j \leq n} f_j
    &= t,
    &&&
    \sum_{0 \leq r \leq l} f_{2 r}
    &= \sum_{0 \leq r \leq l} f_{2 r + 1}
    = \tfrac12 t
    &&&&
\end{align}
respectively. Note that $t$ might belong to $\mathcal{R}$. Such a fully non-abelian version of the P$_4$ system, i.e. the case of $A_2^{(1)}$, was derived in \cite{bobrova2022fully}.
\end{rem}

\medskip
Since the commutative systems \ref{eq:A2l_nc} and \ref{eq:A2l1_nc} are connected with the celebrating dressing chains, we would like to present their Lax pairs of the Zakharov-Shabat type in the non-abelian case. Namely, we have

\begin{thm}
\label{thm:Anpair}
Let $\Psi = \Psi (\lambda, t) \in \Mat_{n + 1}(\mathcal{R})$, $\lambda \in \mathcal{Z}(\mathcal{R})$ satisfy the linear system
\begin{align}
    \label{eq:linsys}
    \left\{
    \begin{array}{rcl}
         \partial_{\lambda} \Psi(\lambda, t)
         &=& \mathcal{A}(\lambda, t) \, \Psi(\lambda, t),
         \\[2mm]
         \partial_{t} \Psi (\lambda, t)
         &=& \mathcal{B}(\lambda, t) \, \Psi(\lambda, t),
    \end{array}
    \right.
\end{align}
where matrices $\mathcal{A} = \mathcal{A}(\lambda, t)$ and $\mathcal{B} = \mathcal{B}(\lambda, t)$ belong to $\Mat_{n + 1} (\mathcal{R})$ and depend on the spectral parameter $\lambda$ as
\begin{align}
    \mathcal{A}(\lambda)
    &= A_0 + A_{-1} \, \lambda^{-1},
    &
    \mathcal{B}(\lambda)
    &= B_1 \, \lambda + B_0
\end{align}
with the following matrix coefficients expressed in terms of the standard unit matrices $E_{r,s} \in \Mat_{n + 1} (\mathbb{C})$
\begin{gather}
    \begin{aligned}
    A_0
    &= E_{1, n} 
    + f_0 \, E_{1, n + 1} + E_{2, n + 1},
    &&&&&
    A_{-1}
    &= \sum_{1 \leq r \leq n + 1} \beta_r \, E_{r,r}
    + \sum_{1 \leq r \leq n} f_r \, E_{r + 1, r}
    + \sum_{1 \leq r \leq n - 1} E_{r + 2, r},
    \end{aligned}
    \\[2mm]
    \begin{aligned}
    B_1
    &= E_{1, n + 1},
    &&&&&
    B_0
    &= \sum_{1 \leq r \leq n + 1} g_{r} \, E_{r,r}
    + \sum_{1 \leq r \leq n} E_{r + 1, r}
    \end{aligned}
\end{gather}
and \,\, $\alpha_0 = 1 + \beta_{n + 1} - \beta_1$, \,\, $\alpha_j = \beta_j - \beta_{j + 1}$, \, $j \in \quot{\mathbb{Z}}{(n + 1)\mathbb{Z}} \setminus \{0\}$.
Then, there exists a set of the $g$-functions such that the compatibility condition of system \eqref{eq:linsys}, i.e. 
\begin{align}
    \label{eq:zcr}
    \partial_t \mathcal{A} - \partial_{\lambda} \mathcal{B}
    &= \mathcal{B} \, \mathcal{A} - \mathcal{A} \, \mathcal{B},
\end{align}
is equivalent to either the \ref{eq:A2l_nc} or \ref{eq:A2l1_nc} system. Their explicit forms are given by \eqref{eq:gsol_evenn} and \eqref{eq:gsol_oddn} respectively.
\end{thm}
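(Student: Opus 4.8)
The plan is to verify the zero-curvature relation \eqref{eq:zcr} by a direct comparison of coefficients of powers of $\lambda$. Substituting $\mathcal A = A_0 + A_{-1}\lambda^{-1}$, $\mathcal B = B_1\lambda + B_0$, and using that $\partial_t$ kills the constants $\beta_r$ while $\partial_\lambda\mathcal B = B_1$, relation \eqref{eq:zcr} becomes a Laurent identity supported on $\lambda^{1},\lambda^{0},\lambda^{-1}$, hence equivalent to the three matrix identities
\begin{align}
    [B_1, A_0] &= 0,
    &
    \dot A_0 - B_1 &= [B_1, A_{-1}] + [B_0, A_0],
    &
    \dot A_{-1} &= [B_0, A_{-1}].
\end{align}
The first holds for free: $B_1 = E_{1,n+1}$ is supported on the entry $(1,n+1)$ and $A_0$ on rows $1,2$ and columns $n, n+1$, so for $n\ge 2$ every product $E_{1,n+1}E_{a,b}$, $E_{a,b}E_{1,n+1}$ occurring in $B_1 A_0$ and $A_0 B_1$ vanishes.

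For the remaining two I would expand all commutators in the matrix units via $E_{r,s}E_{p,q} = \delta_{sp}E_{r,q}$ and sort the result by band. Since $A_{-1}$ and $B_0$ are banded lower triangular, $[B_0, A_{-1}]$ has vanishing diagonal (the $\beta_r$ being central) and vanishing third subdiagonal, its second subdiagonal yields $g_{r+2} - g_r = f_{r+1} - f_r$ for $1\le r\le n-1$, and its first subdiagonal yields $\dot f_r = g_{r+1}f_r - f_r g_r + \alpha_r$ for $1\le r\le n$. In the middle identity the corner matrix $A_0$ feeds only the entries $(1,n-1), (1,n), (1,n+1), (2,n+1), (3,n+1)$; the $(1,n-1)$ and $(3,n+1)$ contributions cancel, the $(1,n)$ and $(2,n+1)$ entries reproduce $g_1 - g_n = f_0 - f_n$ and $g_2 - g_{n+1} = f_1 - f_0$, and the $(1,n+1)$ entry, after using $\alpha_0 = 1 + \beta_{n+1} - \beta_1$ to absorb the lone term $-B_1$, gives $\dot f_0 = g_1 f_0 - f_0 g_{n+1} + \alpha_0$. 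Writing the $g$-indices in $\mathbb Z/(n+1)\mathbb Z$ with $g_0 := g_{n+1}$, all of this packages into the single cyclic system
\begin{align}
    g_{j+2} - g_j &= f_{j+1} - f_j,
    &
    \dot f_j &= g_{j+1}\, f_j - f_j\, g_j + \alpha_j,
    &
    j &\in \mathbb Z/(n+1)\mathbb Z,
\end{align}
to which the whole compatibility condition \eqref{eq:zcr} is therefore equivalent.

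It then remains to solve the first relation for the $g$-functions and substitute into the flow. Iterating $g_{j+2} - g_j = f_{j+1} - f_j$ around the cycle, the outcome depends on the parity of $n+1$. For $n = 2l$ the shift $j\mapsto j+2$ is a single cycle of length $2l+1$, so the $g_j$ are fixed by any one of them up to a central constant; one verifies that $g_j = -\sum_{1\le r\le l} f_{j+2r-1}$ (so $g_{j+1} = -\sum_{1\le r\le l} f_{j+2r}$) is a solution, which is the content of \eqref{eq:gsol_evenn}, and then, because $g_{j+1}$ stands to the left of $f_j$ and $g_j$ to its right, $g_{j+1}f_j - f_j g_j + \alpha_j = f_j\big(\sum_r f_{j+2r-1}\big) - \big(\sum_r f_{j+2r}\big)f_j + \alpha_j$, which is precisely the right-hand side of \ref{eq:A2l_nc}. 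For $n = 2l+1$ the shift splits into two cycles of length $l+1$ on the even and odd indices, so closing each cycle forces the constraint $\sum_r f_{2r} = \sum_r f_{2r+1}$; the $g_j$ are then determined up to two central constants in terms of the $f_j$, $\beta_j$ and $t$ (this is \eqref{eq:gsol_oddn}), and substituting them into the flow, together with the constraint used to rewrite $\tfrac12 t = \sum_r f_{2r}$ as a linear expression in the $f_j$, telescopes the products into the cubic system \ref{eq:A2l1_nc}.

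I expect the even case to go through cleanly along these lines and the odd case to concentrate essentially all of the difficulty: the $g_j$ there carry an inverse power of $t$ and several nested sums, so the delicate point is to organise them and the constraint so that the cubic and linear parts telescope into exactly the form of \ref{eq:A2l1_nc}, with every product in the prescribed non-abelian order. A conceptually cleaner but more indirect route would be to lift the scalar Zakharov–Shabat pair of the periodic dressing chain through the Noumi–Yamada change of variables of \cite{takasaki2003spectral} to the ring $\mathcal R$, carrying the non-commutativity along; since the matrices here are prescribed explicitly, however, the direct coefficient comparison above is the most economical, the bulk of the labour being the banded-commutator bookkeeping and the parity-split telescoping.
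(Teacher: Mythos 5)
Your proposal is correct and follows essentially the same route as the paper's proof: expanding \eqref{eq:zcr} in powers of $\lambda$ yields exactly the paper's constraints \eqref{eq:compcond1}--\eqref{eq:compcond2}, which are then solved for the $g$-functions with the same parity split, the same one-dimensional versus two-dimensional solution space, and the same closing-the-cycle constraint \eqref{eq:cond3} and normalisation $\sum_r f_{2r} = \tfrac12 t$ in the odd case. One small remark: your even-case formula $g_j = -\sum_{1\le r\le l} f_{j+2r-1}$ is the one consistent with \eqref{eq:compcond1} and with the paper's intermediate value $g_0 = -\sum_{1\le r\le l} f_{2r-1}$, whereas the displayed \eqref{eq:gsol_evenn} is shifted by one index relative to it.
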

\begin{proof}
It is enough to specify the form of the $g$-functions. This form depends on the system type. The compatibility condition \eqref{eq:zcr} yields the following constraints
\begin{align}
    \label{eq:compcond1}
    &&
    \dot f_j
    = - f_{j} \, g_j + g_{j + 1} \, f_{j} + \alpha_j,&
    &
    \\[1mm]
    \label{eq:compcond2}
    &&
    f_{j} - f_{j + 1}
    = g_{j} - g_{j + 2},&
    &
    j
    &\in \quot{\mathbb{Z}}{(n + 1) \mathbb{Z}}.
    &&
\end{align}
Without the periodicity condition, the system \eqref{eq:compcond2} can be solved as follows. Let us take the sum over $r = 0, \dots, j$, where $j \geq 0$:
\begin{align}
    &&
    \sum_{0 \leq r \leq j}
    (f_{r} - f_{r + 1})
    &= \sum_{0 \leq r \leq j} g_{j} - g_{j + 2}
    &
    \Rightarrow
    &&
    f_0 - f_{j + 1}
    &= g_0 + g_1
    - g_{j + 1} - g_{j + 2}.
    &&
\end{align}
Now we introduce the function $h_r = (-1)^r \, g_r$ and, thus, the latter can be rewritten as
\begin{align}
    h_{r + 2} - h_{r + 1}
    &= (-1)^r \brackets{
    g_0 + g_1 - f_0 + f_{r + 1}
    }.
\end{align}
After the summing over $r = 0, \dots, j$, we obtain
\begin{align}
    &&
    h_{j + 2} - h_{1}
    &= \sum_{0 \leq r \leq j} (-1)^r \brackets{
    g_0 + g_1 - f_0 + f_{r + 1}
    }
    \\[1mm]
    \label{eq:gsol_gen}
    \Leftrightarrow
    &&
    g_{j}
    &= (-1)^{j + 1} g_1
    + \sum_{0 \leq r \leq j - 2} (-1)^{r + j} \brackets{
    g_0 + g_1 - f_0 + f_{r + 1}
    }.
\end{align}
For even $j = 2k$ and odd $j = 2 k + 1$ values of $j$, the formula \eqref{eq:gsol_gen} is simply given by
\begin{align}
    \label{eq:gsol_evenodd}
    g_{2 k}
    &= g_0 + \sum_{0 \leq r \leq 2 k - 1} (-1)^{r + 1} f_{r},
    &
    g_{2 k + 1}
    &= g_1 + \sum_{0 \leq r \leq 2 k - 1} (-1)^{r + 1}  f_{r + 1}.
\end{align}
\begin{rem}
The system \eqref{eq:compcond2} might be rewritten in a matrix form with the help of the $U$-matrix, i.e. $U \, G = F$, where $G = \begin{pmatrix} g_1 & g_2 & \dots & g_n \end{pmatrix}^{T}$ and $F = \begin{pmatrix} f_1 - f_0 & f_2 - f_1 & \dots & f_0 - f_n \end{pmatrix}^{T}$.
\end{rem}

Turning to the periodicity condition $g_{n + 1} = g_0$, we need to distinguish the odd and even cases. When $n = 2 l$, we have $g_{2 l + 1} = g_0$ and, thus, either \eqref{eq:gsol_gen} or \eqref{eq:gsol_evenodd} leads to
\begin{align}
    g_1
    &= g_0 + \sum_{0 \leq r \leq 2 l - 1} (-1)^r f_{r + 1}.
\end{align}
Similarly, for $n = 2 l + 1$, one can obtain the following constraint on the $f$-functions:
\begin{align}
    \label{eq:cond3}
    \sum_{0 \leq r \leq l} f_{2 r}
    &= \sum_{0 \leq r \leq l} f_{2 r + 1}.
\end{align}
Therefore, when $n = 2 l$, the space of $g$-solutions is one-dimensional, while for $n = 2 l + 1$ it is two-dimensional and yields condition \eqref{eq:cond3}.
Since we have already known the form of the systems \ref{eq:A2l_nc} and \ref{eq:A2l1_nc}, the $g$-functions can be determined uniquely. Let us consider these two cases separately.

\medskip
\textbullet \,\, When $n = 2 l$, $g_1$ can be expressed in terms of $g_0$ and $f_j$. In this case the $g$-functions are given by 
\begin{align}
    \label{eq:gsol_even}
    &&
    g_{2 k}
    &= g_0 
    + \sum_{0 \leq r \leq 2 k - 1} (-1)^{r + 1} f_r,
    &
    g_{2k + 1}
    &= g_0 
    + \sum_{2 k \leq r \leq 2 l - 1} (-1)^{r} f_{r + 1}
    ,
    &
    k 
    &= 0, \dots, l.
    &&
\end{align}
In order to find the $g_0$-function, we substitute this solution into \eqref{eq:compcond1} and, then, require that it coincides with the~\ref{eq:A2l_nc} system. This leads to an explicit expression for the $g_0$-function:
\begin{align}
    g_0
    &= - \sum_{1 \leq r \leq l} f_{2 r - 1},
\end{align}
and, finally, we obtain an explicit form of the $g_j$-functions linearly expressed in terms of the $f$-functions:
\begin{align}
    \label{eq:gsol_evenn}
    g_j
    &= - \sum_{1 \leq r \leq l} f_{j + 2 r},
\end{align}
where indexes belong to the periodic lattice $\quot{\mathbb{Z}}{(n + 1) \mathbb{Z}}$.

\medskip
\textbullet \,\, As we have already mentioned, for $n = 2 l + 1$, the space of $g$-solutions for \eqref{eq:compcond2} is two-dimensional and the constraint \eqref{eq:cond3} holds. Note that the \ref{eq:A2l1_nc} system leads to the conditions
\begin{align}
    &&
    t \, \sum_{0 \leq r \leq l} \dot f_{2 r}
    &= \sum_{0 \leq r \leq l} f_{2 r},
    &
    t \, \sum_{0 \leq r \leq l} \dot f_{2 r + 1}
    &= \sum_{0 \leq r \leq l} f_{2 r + 1}.
    &&
\end{align}
They have the solution
\begin{align}
    \label{eq:fsol}
    &&
    \sum_{0 \leq r \leq l} f_{2 r}
    &= \gamma \, t,
    &
    \sum_{0 \leq r \leq l} f_{2 r + 1}
    &= \tilde \gamma \, t.
    &&
\end{align}
Let us use the normalisation $\gamma = \tilde \gamma = \tfrac12$ and notice that
\begin{align}
    \label{eq:cond4}
     \sum_{0 \leq r \leq l} \dot f_{2 r}
     &= \sum_{0 \leq r \leq l} \dot f_{2 r + 1}
     = \tfrac12.
\end{align}
The condition \eqref{eq:cond4} is crucial for determining the $g$-functions. 

Now we have all the necessary knowledge in our hands and, thus, can proceed to solving \eqref{eq:compcond1} and \eqref{eq:compcond2}. 
Let us substitute \eqref{eq:gsol_evenodd} and \eqref{eq:compcond1} into \eqref{eq:cond4}. The result can be written as
\begin{align}
    \sum_{0 \leq r \leq l} \brackets{
    - f_{2 r} \, g_0
    + g_1 \, f_{2 r}
    }
    &= \sum_{0 \leq r \leq l} \, \sum_{0 \leq s \leq 2 r - 1} (-1)^{s + 1} \brackets{
    f_{2 r} \, f_s 
    - f_{s + 1} \, f_{2 r}
    }
    + \Big( 
    \tfrac12 - \sum_{0 \leq r \leq l} \alpha_{2 r}
    \Big),
\end{align}
or, recalling \eqref{eq:fsol} and {assuming $t \in \mathcal{Z} (\mathcal{R})$},
\begin{align}
    &&
    g_1
    &= g_0 
    + 2 t^{-1} F
    + t^{-1} \, \Big( 
    1 - 2 \sum_{0 \leq r \leq l} \alpha_{2 r}
    \Big),
    &
    F
    &:= \sum_{0 \leq r \leq l} \, \sum_{0 \leq s \leq 2 r - 1} (-1)^{s + 1} \brackets{
    f_{2 r} \, f_s 
    - f_{s + 1} \, f_{2 r}
    }.
    &&
\end{align}
An explicit form of the $g_0$-function can be found by requiring that the condition \eqref{eq:compcond1} is equivalent to the \ref{eq:A2l1_nc} system. The resulting expressions for the $g_0$ and $g_1$ functions are given by\footnote{The author was not succeed in finding a solution for $t \in \mathcal{R}$.}
\begin{multline}
    g_0
    = - \sum_{0 \leq r \leq l - 1} f_{2 r + 1}
    + 2 \, t^{-1} \, 
    \sum_{1 \leq r \leq l} \, 
    \sum_{0 \leq s < r \leq l - 1}
    f_{2 r - 1} \, 
    f_{2 s},
    \\[1mm]
    g_1
    = - \tfrac12 \, t \, + \, f_0 \, + \sum_{0 \leq r \leq l - 1} f_{2 r + 1}
    - 2 \, t^{-1} \,
    \sum_{1 \leq r \leq l} \, 
    \sum_{0 \leq s < r \leq l - 1}
    f_{2 s} \, f_{2 r - 1}
    + t^{-1} \, \Big( 
    1 - 2 \sum_{0 \leq r \leq l} \alpha_{2 r}
    \Big)
    ,
\end{multline}
where we have used condition \eqref{eq:cond3}. 
So, a final form of the $g$-functions reads as 
\begin{align}
    \notag
    g_{2 k}
    = - \sum_{k \leq r \leq l - 1} f_{2 r + 1}
    - \sum_{0 \leq r \leq k - 1} f_{2 r}
    + 2 \, t^{-1} \, 
    \sum_{1 \leq r \leq l} \, 
    \sum_{0 \leq s < r \leq l - 1}
    f_{2 r - 1} \, 
    f_{2 s},
    \hspace{5cm}
    \\[1mm]
    \label{eq:gsol_oddn}
    g_{2 k + 1}
    = - \tfrac12 \, t 
    + \sum_{k \leq r \leq l - 1} f_{2 r + 1}
    + \sum_{0 \leq r \leq k} f_{2 r}
    - 2 \, t^{-1} \,
    \sum_{1 \leq r \leq l} \, 
    \sum_{0 \leq s < r \leq l - 1}
    f_{2 s} \, f_{2 r - 1}
    + t^{-1} \, \Big( 
    1 - 2 \sum_{0 \leq r \leq l} \alpha_{2 r}
    \Big),
    \\[1mm]
    \notag
    k
    = 0, 1, \dots, l
    .
\end{align}
Unlike the previous case $n = 2 l$, they are no longer linear in $f$, since $t$ is defined by \eqref{eq:fsol} with $\gamma = \tilde \gamma = \tfrac12$.
\end{proof}

\section{\texorpdfstring{$d$}{d}-\Painleve equations}
\label{sec:dP}
In this Section we use the following agreements. The variables $q$, $p$ belong to $\mathcal{R}$ and all constant parameters labeling by greek letters are from the field $\mathbb{C}$. Usually, $t$ is a central element, i.e. $t \in \mathcal{Z}(\mathcal{R})$, except for the \ref{eq:P20sys} and \ref{eq:P40sys} systems. For the discrete dynamics, we will use the usual notation. Namely, for a $T$-action of the translation operator $T$, we set $T(f) = \bar f$ and $T^{-1} (f) = \underline{f}$. Regarding the difference systems, we use $T^n (f) = f_n$. 

\begin{rem}
All computations have been done with the help of \href{https://mathweb.ucsd.edu/~ncalg/}{NCAlgebra package} for \href{https://www.wolfram.com/mathematica/}{Wolfram Mathematica}.
\end{rem}

\begin{rem}
Note that symmetries of the non-abelian analog of the first Painlevé equation written below do not form any affine Weyl group structure, however, unlike the commutative case, it has the $\tau$-symmetry. 
\begin{align}
    \tag*{P$_1$}
    \label{eq:P1sys}
    &
    \left\{
    \begin{array}{rcl}
         \dot q
         &=& p,  
         \\[2mm]
         \dot p
         &=& 6 q^2 + t.
    \end{array}
    \right.
\end{align}
\end{rem}

\begin{rem}
For some of the difference equations we present their continuous limits, which can be done as follows. For the simplicity, suppose that we have a difference equation for the functions $f_n$. One can take the change of variables with the commutative parameter $\varepsilon$
\begin{align}
    &&
    z
    &= \varepsilon \, n
    &&
\end{align}
supplemented by the maps
\begin{align}
    &&
    f_n
    &= F,
    &
    f_{n + k}
    &= F 
    + \, k \, \varepsilon \dot F
    + \, \tfrac12 k^2 \, \varepsilon^2 \ddot F
    + O (\varepsilon^3)
    .
    &&
\end{align}
The latter must be chosen in such a way that the limit $\varepsilon \to 0$ exists. Below we will give explicit examples, where in the formulas the capital letters correspond to the differential equations.
\end{rem}

\subsection{\texorpdfstring{$d$-P$(E_7)$}{dPE7}}
\label{sec:dPE7}
Consider the \ref{eq:P20sys} system \cite{Retakh_Rubtsov_2010}
\begin{align}
    \tag*{P$_2$}
    \label{eq:P20sys}
    &
    \left\{
    \begin{array}{rcr}
         \dot q
         &=& - q^2 + p - \tfrac12 t,  
         \\[2mm]
         \dot p
         &=& q p + p q + \alpha_1.
    \end{array}
    \right.
\end{align}
Here we assume that $t$ is also an element of $\mathcal{R}$ such that $\dot{t} = 1$. Let $\alpha_0 + \alpha_1 = 1$ and $f := - p + 2 q^2 + t$. Then, one can verify that Bäcklund transformations (BT) for this system are given in Table \ref{tab:BT_P20} (cf. with \cite{bershtein2023hamiltonian})
\begin{table}[H]
    \centering
    \begin{tabular}{c||cc|ccc}
         & $\alpha_0$
         & $\alpha_1$
         & $q$
         & $p$
         & $t$
         \\[1mm]
         \hline\hline
         $s_0$
         & $- \alpha_0$
         & $\alpha_1 + 2 \alpha_0$
         & $\,\, q - \alpha_0 \, f^{-1\phantom{\frac{1}{1}}}$
         & $p - 2 \alpha_0 q f^{-1} - 2 \alpha_0 f^{-1} q + 2 \alpha_0 f^{-2}$
         & $t$
         \\[1mm]
         $s_1$
         & $\alpha_0 + 2 \alpha_1$
         & $- \alpha_1$
         & $q + \alpha_1 p^{-1}$
         & $p$
         & $t$
         \\[1mm]
         \hline
         $\pi$
         & $\alpha_1$
         & $\alpha_0$
         & $- q$
         & $- p + 2 q^2 + t$
         & $t$
    \end{tabular}
    \caption{BT for the \ref{eq:P20sys} system}
    \label{tab:BT_P20}
\end{table}
\begin{proof}
Let us illustrate the computations in this simple case, because in the other cases they are so tedious. Thus, we have
\begin{align}
    \dot{s_1(q)}
    &= \dot q - \alpha_1 p^{-1} \dot p p^{-1}
    = \brackets{
    - q^2 + p - \tfrac12 t 
    }
    - \alpha_1 p^{-1} \brackets{
    q p + p q + \alpha_1
    } p^{-1}
    \\
    &= - \brackets{
    q^2 + \alpha_1 p^{-1} q + \alpha_1 q p^{-1} + \alpha_1^2 p^{-2}
    }
    + p - \tfrac12 t
    = - s_1(q)^2 + s_1(p) - \tfrac12 s_1(t),
    \\[2mm]
    \dot{s_1(p)}
    &= \dot p
    = q p + p q + \alpha_1
    = \brackets{q + \alpha_1 p^{-1}} p
    + p \brackets{q + \alpha_1 p^{-1}} - \alpha_1
    = s_1(q) s_1(p) + s_1(p) s_1(q) + s_1(\alpha_1);
    \\[2mm]
    \dot{\pi(q)}
    &= - \dot q
    = q^2 - p - \tfrac12 t
    = - q^2 + \brackets{2 q^2 - p + t} - \tfrac12 t
    = - \pi(q)^2 + \pi(p) - \tfrac12 \pi(t),
    \\[2mm]
    \dot{\pi(p)}
    &= - \dot p + 2 \dot q q + 2 q \dot q + 1
    = - (q p + p q + \alpha_1)
    + 2 \brackets{
    - q^2 + p - \tfrac12 t
    } q
    + 2 q \brackets{
    - q^2 - p + t
    }
    + 1 - \alpha_1
    \\
    &= - \brackets{
    2 q^2 - p + t
    } q
    - q \brackets{
    2 q^2 - p + t
    }
    + (1 - \alpha_1)
    = \pi(p) \pi(q) + \pi(q) \pi(p) + \pi(\alpha).
\end{align}
For the remaining element $s_0$, we note that $s_0 = \pi s_1 \pi$.
\end{proof}
Similar to the commutative case, these elements form an extended affine Weyl group of type $A_1^{(1)}$:
\begin{gather}
    \label{eq:WA1}
    \begin{gathered}
    \widetilde{W}(A_1^{(1)})
    = \angleBrackets{s_0, s_1; \pi},
    \\[1mm]
    \begin{aligned}
    s_i^2
    &= 1,
    &&&&&
    \pi^2 
    &= 1,
    &&&&&
    \pi s_i
    &= s_{i + 1} \pi,
    &&&&&
    i 
    &\in \quot{\mathbb{Z}}{2 \mathbb{Z}}.
    \end{aligned}
    \end{gathered}
\end{gather}
The corresponding Cartan matrix $C = (c_{ij})$ and Dynkin diagram are given below. 
\begin{table}[H]
    \vspace{-8mm}
    \begin{tabular}{ccc}
    $
    C
    = 
    \begin{pmatrix}
    \,\, 2 & - 2 \\[0.9mm] - 2 & \,\, 2    
    \end{pmatrix}
    $
    & \hspace{2cm} &
    \begin{tabular}{c}
    \\[0mm]
    \scalebox{0.9}{\tikzset{every picture/.style={line width=0.75pt}} 

\begin{tikzpicture}[x=0.75pt,y=0.75pt,yscale=-1,xscale=1]

\draw  [draw opacity=0][dash pattern={on 0.84pt off 2.51pt}] (80.04,121.45) .. controls (101.98,88.62) and (146.38,79.78) .. (179.21,101.71) .. controls (186.36,106.48) and (192.37,112.32) .. (197.17,118.87) -- (139.5,161.17) -- cycle ; \draw  [dash pattern={on 0.84pt off 2.51pt}] (80.04,121.45) .. controls (101.98,88.62) and (146.38,79.78) .. (179.21,101.71) .. controls (186.36,106.48) and (192.37,112.32) .. (197.17,118.87) ;  
\draw    (200.08,122.86) ;
\draw [shift={(200.93,124.11)}, rotate = 235.64] [color={rgb, 255:red, 0; green, 0; blue, 0 }  ][line width=0.75]    (6.56,-2.94) .. controls (4.17,-1.38) and (1.99,-0.4) .. (0,0) .. controls (1.99,0.4) and (4.17,1.38) .. (6.56,2.94)   ;
\draw  [fill={rgb, 255:red, 0; green, 0; blue, 0 }  ,fill opacity=1 ] (74.06,129.83) .. controls (74.06,127.39) and (76.04,125.41) .. (78.48,125.41) .. controls (80.91,125.41) and (82.89,127.39) .. (82.89,129.83) .. controls (82.89,132.26) and (80.91,134.24) .. (78.48,134.24) .. controls (76.04,134.24) and (74.06,132.26) .. (74.06,129.83) -- cycle ;
\draw  [fill={rgb, 255:red, 0; green, 0; blue, 0 }  ,fill opacity=1 ] (198.06,129.83) .. controls (198.06,127.39) and (200.04,125.41) .. (202.48,125.41) .. controls (204.91,125.41) and (206.89,127.39) .. (206.89,129.83) .. controls (206.89,132.26) and (204.91,134.24) .. (202.48,134.24) .. controls (200.04,134.24) and (198.06,132.26) .. (198.06,129.83) -- cycle ;
\draw [line width=0.75]    (91.89,128.33) -- (189.06,128.33)(91.89,131.33) -- (189.06,131.33) ;
\draw [shift={(198.06,129.83)}, rotate = 180] [fill={rgb, 255:red, 0; green, 0; blue, 0 }  ][line width=0.08]  [draw opacity=0] (10.72,-5.15) -- (0,0) -- (10.72,5.15) -- (7.12,0) -- cycle    ;
\draw [shift={(82.89,129.83)}, rotate = 360] [fill={rgb, 255:red, 0; green, 0; blue, 0 }  ][line width=0.08]  [draw opacity=0] (10.72,-5.15) -- (0,0) -- (10.72,5.15) -- (7.12,0) -- cycle    ;
\draw    (79.73,123.02) ;
\draw [shift={(79.2,124.19)}, rotate = 294.09] [color={rgb, 255:red, 0; green, 0; blue, 0 }  ][line width=0.75]    (6.56,-2.94) .. controls (4.17,-1.38) and (1.99,-0.4) .. (0,0) .. controls (1.99,0.4) and (4.17,1.38) .. (6.56,2.94)   ;

\draw (52.38,125.57) node [anchor=north west][inner sep=0.75pt]  [font=\large]  {$\alpha _{0}$};
\draw (212.38,125.55) node [anchor=north west][inner sep=0.75pt]  [font=\large]  {$\alpha _{1}$};
\draw (133.78,97.3) node [anchor=north west][inner sep=0.75pt]  [font=\large]  {$\pi $};

\end{tikzpicture}}
    \end{tabular}
    \end{tabular}
    \vspace{-10mm}
\end{table}
\hspace{-5.2mm}
Note that from these data it is easy to see how the root variables $\alpha_0$ and $\alpha_1$ change under the Weyl group action, since the reflection formula reads as \eqref{eq:refl}.

Proceeding to the discrete system, consider the translation operator $T = s_1 \pi$. It acts on the parameters according to the formula
\begin{align}
    T (\alpha_0, \, \alpha_1)
    &= (\alpha_0 - 1, \, \alpha_1 + 1),
\end{align}
while the $q$ and $p$ variables change as follows
\begin{align}
    \bar q
    &= s_1 \pi (q)
    = - s_1(q)
    = - q - \alpha_1 p^{-1},
    &
    \bar p
    &= s_1 \pi (p)
    = s_1(- p + 2 q^2 + t)
    = - p + 2 {\bar{q}}^2 + t.
\end{align}
So, we obtain the dynamics
\begin{align}
    \tag*{\ref{eq:dPE7}}
    T \brackets{
    q, \, p, \, t; \, \alpha_0, \, \alpha_1
    }
    &= \brackets{
    - q - \alpha_1 p^{-1}, \,
    - p + 2 {\bar{q}}^2 + t, \,
    t; \,
    \alpha_0 - 1, \,
    \alpha_1 + 1
    },
\end{align}
which is a non-commutative analog of the $d$-P$(E_7)$ equation from \cite{sakai2001rational} ($E_7^{(1)}$-surface on page 206). The latter has the \ref{eq:P1sys} system as a continuous limit given by the formulas
\begin{align}
    q
    &= 1 + \varepsilon^2 \, Q - \tfrac16 \, \varepsilon^3 \, P,
    &
    p
    &= - 2 + 2 \varepsilon^2 \, Q + \tfrac23 \, \varepsilon^3 \, P,
    &
    t
    &= - 6 + \tfrac13 \, \varepsilon^4 \, T,
    &
    \alpha_1
    &= 4 + \tfrac23 \, \varepsilon^4 \, T.
\end{align}

The \ref{eq:dPE7} system can be rewritten in the difference form
\begin{align}
    &&
    &\left\{
    \begin{array}{rcl}
         q_{n + 1} + q_n
         &=& - \alpha_{1, n} p_n^{-1}  
         \\[2mm]
         p_{n} + p_{n - 1}
         &=& 2 q_n^2 + t, 
    \end{array}
    \right.
    &
    \alpha_{1, n} 
    &= \alpha_1 + n,
    &&
\end{align}
which reduces to the following second-order difference equation: 
\begin{align}
    \tag*{alt-d-P$_1$}
    \label{eq:altdP1}
    &&
    \alpha_{1, n} \brackets{
    q_{n + 1} + q_n
    }^{-1}
    + \alpha_{1, n - 1} \brackets{
    q_n + q_{n - 1}
    }^{-1}
    &= - 2 q_n^2 - t,
    &
    \alpha_{1, n} 
    &= \alpha_1 + n.
    &&
\end{align}
We call it a non-commutative version of the alt-d-P$_1$ equation (see, e.g., the second equation in Appendix A.4 of the paper \cite{van2007discrete}).

\subsection{\texorpdfstring{$d$-P$(E_6)$}{dPE6}}
\label{sec:dPE6}
Take the \ref{eq:P40sys} system \cite{Bobrova_2022}
\begin{align}
    \tag*{P$_4$}
    \label{eq:P40sys}
    &
    \left\{
    \begin{array}{rcr}
         \dot q
         &=& - q^2 + q p + p q - t q - \alpha_1
         ,  
         \\[2mm]
         \dot p
         &=& - p^2 + q p + p q + p t + \alpha_2.
    \end{array}
    \right.
\end{align}
Recall that here $t \in \mathcal{R}$ such that $\dot{t} = 1$. Set $\alpha_0 + \alpha_1 + \alpha_2 = 1$ and $f := - p + q + t$ and consider the following Bäcklund transformations \cite{bobrova2022fully} (see also Theorem \ref{thm:Aldsys})
\begin{table}[H]
    \centering
    \begin{tabular}{c||ccc|ccc}
         & $\alpha_0$
         & $\alpha_1$
         & $\alpha_2$
         & $q$
         & $p$
         & $t$
         \\[1mm]
         \hline\hline
         $s_0$
         & $- \alpha_0$
         & $\alpha_1 + \alpha_0$
         & $\alpha_2 + \alpha_0$
         & $\,\, q - \alpha_0 f^{-1 \phantom{\frac{1}{}}}$
         & $p - \alpha_0 f^{-1}$
         & $t$
         \\[1mm]
         $s_1$
         & $\alpha_0 + \alpha_1$
         & $- \alpha_1$
         & $\alpha_2 + \alpha_1$
         & $q$
         & $p - \alpha_1 q^{-1}$
         & $t$
         \\[1mm]
         $s_2$
         & $\alpha_0 + \alpha_2$
         & $\alpha_1 + \alpha_2$
         & $- \alpha_2$
         & $q + \alpha_2 p^{-1}$
         & $p$
         & $t$
         \\[1mm]
         \hline
         $\pi$
         & $\alpha_1$
         & $\alpha_2$
         & $\alpha_0$
         & $- p$
         & $- p + q + t$
         & $t$
    \end{tabular}
    \caption{BT for the \ref{eq:P40sys} system}
    \label{tab:BT_P40}
\end{table}
Similar to the commutative case, they form an extended affine Weyl group of type $A_2^{(1)}$
\begin{gather}
    \label{eq:WA2}
    \begin{gathered}
    \widetilde{W}(A_2^{(1)})
    = \angleBrackets{s_0, s_1, s_2; \pi},
    \\[1mm]
    \begin{aligned}
    s_i^2
    &= 1,
    &&&&&
    (s_i \, s_{i + 1})^3 
    &= 1,
    &&&&&
    \pi^3 
    &= 1,
    &&&&&
    \pi s_i
    &= s_{i + 1} \pi,
    &&&&&
    i 
    &\in \quot{\mathbb{Z}}{3 \mathbb{Z}}.
    \end{aligned}
    \end{gathered}
\end{gather}
In order to illustrate the $\widetilde W$-action on the root variables, we present the Cartan matrix and Dynkin diagram
\begin{table}[H]
    \vspace{-5mm}
    \begin{tabular}{ccc}
    $C
    = 
    \begin{pmatrix}
    2 & - 1 & - 1 \\[0.9mm] 
    - 1 & 2 & - 1 \\[0.9mm]
    - 1 & - 1 & 2    
    \end{pmatrix}
    $
    & \hspace{2cm} &
    \begin{tabular}{c}
    \\[-1mm]
    \scalebox{0.85}{\tikzset{every picture/.style={line width=0.75pt}} 

\begin{tikzpicture}[x=0.75pt,y=0.75pt,yscale=-1,xscale=1]

\draw   (292.14,64) -- (354.14,170.16) -- (230.14,170.16) -- cycle ;
\draw  [dash pattern={on 0.84pt off 2.51pt}] (221,135.5) .. controls (221,96.01) and (253.01,64) .. (292.5,64) .. controls (331.99,64) and (364,96.01) .. (364,135.5) .. controls (364,174.99) and (331.99,207) .. (292.5,207) .. controls (253.01,207) and (221,174.99) .. (221,135.5) -- cycle ;
\draw  [fill={rgb, 255:red, 0; green, 0; blue, 0 }  ,fill opacity=1 ] (288.09,64) .. controls (288.09,61.56) and (290.06,59.59) .. (292.5,59.59) .. controls (294.94,59.59) and (296.91,61.56) .. (296.91,64) .. controls (296.91,66.44) and (294.94,68.41) .. (292.5,68.41) .. controls (290.06,68.41) and (288.09,66.44) .. (288.09,64) -- cycle ;
\draw  [fill={rgb, 255:red, 0; green, 0; blue, 0 }  ,fill opacity=1 ] (225.73,170.16) .. controls (225.73,167.72) and (227.71,165.75) .. (230.14,165.75) .. controls (232.58,165.75) and (234.56,167.72) .. (234.56,170.16) .. controls (234.56,172.6) and (232.58,174.57) .. (230.14,174.57) .. controls (227.71,174.57) and (225.73,172.6) .. (225.73,170.16) -- cycle ;
\draw  [fill={rgb, 255:red, 0; green, 0; blue, 0 }  ,fill opacity=1 ] (349.73,170.16) .. controls (349.73,167.72) and (351.71,165.75) .. (354.14,165.75) .. controls (356.58,165.75) and (358.56,167.72) .. (358.56,170.16) .. controls (358.56,172.6) and (356.58,174.57) .. (354.14,174.57) .. controls (351.71,174.57) and (349.73,172.6) .. (349.73,170.16) -- cycle ;
\draw    (358,164.6) ;
\draw [shift={(357.44,166)}, rotate = 291.9] [color={rgb, 255:red, 0; green, 0; blue, 0 }  ][line width=0.75]    (6.56,-2.94) .. controls (4.17,-1.38) and (1.99,-0.4) .. (0,0) .. controls (1.99,0.4) and (4.17,1.38) .. (6.56,2.94)   ;
\draw    (233.5,175.77) -- (233.42,175.66) ;
\draw [shift={(232.17,174.1)}, rotate = 51.34] [color={rgb, 255:red, 0; green, 0; blue, 0 }  ][line width=0.75]    (6.56,-2.94) .. controls (4.17,-1.38) and (1.99,-0.4) .. (0,0) .. controls (1.99,0.4) and (4.17,1.38) .. (6.56,2.94)   ;
\draw    (286.09,64.5) ;
\draw [shift={(287.33,64.52)}, rotate = 180.72] [color={rgb, 255:red, 0; green, 0; blue, 0 }  ][line width=0.75]    (6.56,-2.94) .. controls (4.17,-1.38) and (1.99,-0.4) .. (0,0) .. controls (1.99,0.4) and (4.17,1.38) .. (6.56,2.94)   ;
\draw    (292.5,64) -- (232.33,165.77) ;
\draw    (354.14,170.16) -- (294.67,68.43) ;
\draw    (230.14,170.16) -- (349.83,169.77) ;

\draw (204.04,165.9) node [anchor=north west][inner sep=0.75pt]  [font=\large]  {$\alpha _{1}$};
\draw (364.04,165.88) node [anchor=north west][inner sep=0.75pt]  [font=\large]  {$\alpha _{2}$};
\draw (286.24,45.34) node [anchor=north west][inner sep=0.75pt]  [font=\large,rotate=-0.45]  {$\alpha _{0}$};
\draw (346.62,102.66) node [anchor=north west][inner sep=0.75pt]  [font=\large,rotate=-59.99]  {$\pi $};
\draw (233.36,111.95) node [anchor=north west][inner sep=0.75pt]  [font=\large,rotate=-299.77]  {$\pi $};
\draw (286.31,190.33) node [anchor=north west][inner sep=0.75pt]  [font=\large,rotate=-359.84]  {$\pi $};

\end{tikzpicture}}
    \end{tabular}
    \end{tabular}
    \vspace{-5mm}
\end{table}
Turning to the discrete dynamics, we are going to consider two independent directions on the $(\alpha_0, \alpha_1, \alpha_2)$-lattice, namely
\begin{align}
    T
    &:= s_1 \pi s_2,
    &
    T (\alpha_0, \alpha_1, \alpha_2)
    &= (\alpha_0, \alpha_1 + 1, \alpha_2 - 1),
    \\[1mm]
    T'
    &:= s_2 s_1 \pi
    ,
    &
    T' (\alpha_0,  \alpha_1, \alpha_2)
    &= (\alpha_0 - 1, \alpha_1, \alpha_2 + 1).
\end{align}

\medskip
\subsubsection{{$T$-direction}}
\label{sec:dPE6_1}
The corresponding translation operator defined by $T = s_1 \pi s_2$ acts on the variables and parameters as follows
\begin{align}
    \tag*{\ref{eq:dPE6}}
    T \brackets{
    q, \, p, \, t; \, \alpha_0, \, \alpha_1, \, \alpha_2
    }
    &= \brackets{
    - t - q + \bar p - (\alpha_2 - 1) \, {\bar{p}}^{-1}, \,
    t + q - p + \alpha_1 \, q^{-1}, \, 
    t; \,
    \alpha_0, \,
    \alpha_1 + 1, \,
    \alpha_2 - 1
    }.
\end{align}
The latter is a non-abelian version of the discrete dynamic related to the $E_6^{(1)}$-surface (see page 205 in \cite{sakai2001rational}), which is also known as the d-P$_2$ equation. The \ref{eq:dPE6} can be rewritten as the second-order difference system
\begin{align}
    \tag*{d-P$_2$}
    \label{eq:dP2}
    &&
    &\left\{
    \begin{array}{rcr}
         q_{n} + q_{n - 1}
         &=&  - t + p_n - \alpha_{2, n} \, p_n^{-1} 
         ,
         \\[2mm]
         p_{n + 1} + p_{n}
         &=&  t + q_n + \alpha_{1, n} \, q_n^{-1}, 
    \end{array}
    \right.
    &
    \alpha_{1, n} 
    &= \alpha_1 + n,
    &
    \alpha_{2, n}
    &= \alpha_2 - n
    .
    &&
\end{align}
Its continuous limit defined by the formulas given below is the \ref{eq:P20sys} system:
\begin{gather}
\begin{aligned}
    q
    &= - 1 - \varepsilon \, Q - \tfrac14 \, \varepsilon^2 \, P,
    &&&
    p
    &= 1 - \varepsilon \, Q + \tfrac14 \, \varepsilon^2 \, P,
    &&&
    t
    &= 2 + \tfrac14 (\mathrm{A}_1 - 1) \, \varepsilon^3,
\end{aligned}
\\[1mm]
\begin{aligned}
    \alpha_1
    &= - 1 - \tfrac12 \, \varepsilon^2 \, T
    + \tfrac12 (\mathrm{A}_1 - 1) \, \varepsilon^3,
    &&&
    \alpha_2
    &= 1 + \tfrac12 \, \varepsilon^2 \, T.
\end{aligned}
\end{gather}
Note that, after a restriction of the parameter $\alpha_2$, the \ref{eq:dPE6} can be reduced to the difference equation
\begin{align}
    \tag*{d-P$_1$}
    \label{eq:dP1}
    &&
    q_{n + 1} + q_n + q_{n - 1}
    &= \alpha_n q_n^{-1} - t,
    &
    \alpha_n 
    &= \alpha_1 + n,
    &&
\end{align}
which is known as the d-P$_1$ equation in the commutative case (see, e.g., Appendix A.1 in \cite{van2007discrete}).

\begin{rem}
Regarding the matrix \ref{eq:dP1} equation, it passes the singularity confinement test \cite{cassatella2014singularity} which~is a discrete analog of the matrix Painlevé-Kovalevskaya test introduced in \cite{Balandin_Sokolov_1998}. 
\end{rem}
\begin{rem}
Another non-abelian (matrix) versions of the d-P$_1$ equation were obtained in the paper \cite{adler2020}, where the author had used a symmetry reduction of the non-abelian Volterra lattices. 
\end{rem}

\medskip
\subsubsection{{$T'$-direction}}
\label{sec:dPE6_2}
The $T'$-operator defined by $T' = s_2 s_1 \pi$ gives the dynamics
\begin{align}
    T' \brackets{
    q, \, p, \, t; \, \alpha_0, \, \alpha_1, \, \alpha_2
    }
    &= \brackets{
    - p + (\alpha_1 + \alpha_2) (\alpha_2 p^{-1} + q)^{-1}, \,
    t + \bar q + q + \alpha_2 p^{-1}, \, 
    t; \,
    \alpha_0 - 1, \,
    \alpha_1, \,
    \alpha_2 + 1
    },
\end{align}
which can be rewritten in the form 
\begin{align}
    \tag*{\ref{eq:dPE6'}}
    &&
    \bar f \, f
    &= - (\bar g - \alpha_1) (\bar g + \alpha_2)^{-1} \bar g,
    &
    \bar g
    + g
    + \LieBrackets{f \, \bar g, f^{-1}}
    &= \alpha_1
    + f t + f^2,
    &
    \bar \alpha_1
    &= \alpha_1,
    &
    \bar \alpha_2
    &= \alpha_2 + 1,
    &&
\end{align}
where $f := q$ and $\bar g := p q$. Its abelian analog is eq. (8.258) from \cite{kajiwara2017geometric}.
\begin{proof}
Indeed, we first note that
\begin{align}
    &&
    \bar q 
    &= (- p q + \alpha_1) (\alpha_2 + p q)^{-1} q,
    &
    \Leftrightarrow&
    &
    \bar q \, q
    &= - (p q - \alpha_1) (p q + \alpha_2)^{-1} p q,
    &&
\end{align}
Set $\bar g := p q$. Then, on the one hand, $g = \underline{p} \, \underline{q}$ and, on the other hand,
\begin{align}
    \underline{p}
    &= T_2^{-1} (p)
    = \alpha_0 (t - p + q)^{-1} - q,
    &
    \underline{q}
    &= T_2^{-1} (q)
    = (1 - \alpha_2) \underline{p}^{-1} - (t - p + q),
\end{align}
where $T_2^{-1} = \pi^{-1} s_1^{-1} s_2^{-1} = \pi^2 s_1 s_2$. Hence,
\begin{align}
    \underline{p} \, \underline{q}
    &= \underline{p} \brackets{
    (1 - \alpha_2) \underline{p}^{-1} - (t - p + q)
    }
    = 1 - \alpha_2
    - \brackets{
    \alpha_0 (t - p + 1)^{-1} - q
    } (t - p + q)
    = \alpha_1 + q t - q p + q^2,
\end{align}
or, equivalently,
\begin{align}
    g + \bar g + q p - p q
    &= \alpha_1 + q t + q^2.
\end{align}
Noting that 
\begin{align}
    q p - p q 
    = q p q \,\, q^{-1} - q^{-1} \,\, q p q = \LieBrackets{q \,\, p q, q^{-1}} 
    = \LieBrackets{q \, \bar{g}, q^{-1}}
\end{align}
and recalling the definition of $f$, we arrive at the wanted dynamics \ref{eq:dPE6'}.
\end{proof} 
Similar to the previous case, one can write down the related system of difference equations. It is given~by
\begin{align}
    &&
    &\left\{ 
    \begin{array}{rcl}
         f_{n} \, f_{n - 1} 
         &=& - (g_n - \alpha_{1, n}) \, 
         (g_n + \alpha_{2, n} - 1)^{-1} \,
         g_n
         ,
         \\[2mm]
         g_{n + 1} + g_{n} 
         + \LieBrackets{f_n \, g_{n + 1}, f_n^{-1}} 
         &=& \alpha_{1, n} 
         + f_n \, t + f_n^2, 
    \end{array}
    \right.
    &
    &\begin{aligned}
    \alpha_{1, n} 
    &= \alpha_{1},
    \\[0.8mm]
    \alpha_{2, n}
    &= \alpha_2 + n.
    \end{aligned}
    &&
\end{align}

\subsection{\texorpdfstring{$d$-P$(D_7)$}{dPD7}}
\label{sec:dPD7}
In the abelian case, this surface type is connected with the special case of the third \Painleve equation, the so-called P$_3(D_7)$ equation. Consider its non-abelian analog given by the \ref{eq:P37sys} system \cite{Kawakami_2015}
\begin{align}
    \tag*{P$_{3,7}$}
    \label{eq:P37sys}
    &
    \left\{
    \begin{array}{rcl}
         t \, \dot q
         &=& 2 q p q + \alpha_1 q + t
         ,  
         \\[2mm]
         t \, \dot p
         &=& - 2 p q p - \alpha_1 p - 1
         .
    \end{array}
    \right.
\end{align}
Here $t$ belongs to the center $\mathcal{Z}(\mathcal{R})$. Let $\alpha_0 + \alpha_1 = 1$. Bäcklund transformations are given in Table \ref{tab:BT_P37} (cf.~with \cite{bershtein2023hamiltonian}) and form an extended affine Weyl group of type $A_1^{(1)}$ (see \eqref{eq:WA1} for more details).

\begin{minipage}[l]{0.65\textwidth}
\begin{table}[H]
    \centering
    \begin{tabular}{c||cc|ccc}
         & $\alpha_0$
         & $\alpha_1$
         & $q$
         & $p$
         & $t$
         \\[1mm]
         \hline\hline
         $s_0$
         & $- \alpha_0$
         & $\alpha_1 + 2 \alpha_0$
         & $q$
         & $\,\, p - \alpha_0 q^{-1} + t q^{-2 \phantom{\frac{1}{}}}$
         & $- t$
         \\[1mm]
         $s_1$
         & $\alpha_0 + 2 \alpha_1$
         & $- \alpha_1$
         & $- q - \alpha_1 p^{-1} - p^{-2}$
         & $- p$
         & $- t$
         \\[1mm]
         \hline
         $\pi$
         & $\alpha_1$
         & $\alpha_0$
         & $t p$
         & $- t^{-1\phantom{\frac{1}{}}} q \,\,$
         & $- t$
    \end{tabular}
    \caption{BT for the \ref{eq:P37sys} system}
    \label{tab:BT_P37}
\end{table}
\end{minipage}
\hspace{0.5cm}
\begin{minipage}[l]{0.3\textwidth}
    \scalebox{0.9}{\tikzset{every picture/.style={line width=0.75pt}} 

\begin{tikzpicture}[x=0.75pt,y=0.75pt,yscale=-1,xscale=1]

\draw  [draw opacity=0][dash pattern={on 0.84pt off 2.51pt}] (80.04,121.45) .. controls (101.98,88.62) and (146.38,79.78) .. (179.21,101.71) .. controls (186.36,106.48) and (192.37,112.32) .. (197.17,118.87) -- (139.5,161.17) -- cycle ; \draw  [dash pattern={on 0.84pt off 2.51pt}] (80.04,121.45) .. controls (101.98,88.62) and (146.38,79.78) .. (179.21,101.71) .. controls (186.36,106.48) and (192.37,112.32) .. (197.17,118.87) ;  
\draw    (200.08,122.86) ;
\draw [shift={(200.93,124.11)}, rotate = 235.64] [color={rgb, 255:red, 0; green, 0; blue, 0 }  ][line width=0.75]    (6.56,-2.94) .. controls (4.17,-1.38) and (1.99,-0.4) .. (0,0) .. controls (1.99,0.4) and (4.17,1.38) .. (6.56,2.94)   ;
\draw  [fill={rgb, 255:red, 0; green, 0; blue, 0 }  ,fill opacity=1 ] (74.06,129.83) .. controls (74.06,127.39) and (76.04,125.41) .. (78.48,125.41) .. controls (80.91,125.41) and (82.89,127.39) .. (82.89,129.83) .. controls (82.89,132.26) and (80.91,134.24) .. (78.48,134.24) .. controls (76.04,134.24) and (74.06,132.26) .. (74.06,129.83) -- cycle ;
\draw  [fill={rgb, 255:red, 0; green, 0; blue, 0 }  ,fill opacity=1 ] (198.06,129.83) .. controls (198.06,127.39) and (200.04,125.41) .. (202.48,125.41) .. controls (204.91,125.41) and (206.89,127.39) .. (206.89,129.83) .. controls (206.89,132.26) and (204.91,134.24) .. (202.48,134.24) .. controls (200.04,134.24) and (198.06,132.26) .. (198.06,129.83) -- cycle ;
\draw [line width=0.75]    (91.89,128.33) -- (189.06,128.33)(91.89,131.33) -- (189.06,131.33) ;
\draw [shift={(198.06,129.83)}, rotate = 180] [fill={rgb, 255:red, 0; green, 0; blue, 0 }  ][line width=0.08]  [draw opacity=0] (10.72,-5.15) -- (0,0) -- (10.72,5.15) -- (7.12,0) -- cycle    ;
\draw [shift={(82.89,129.83)}, rotate = 360] [fill={rgb, 255:red, 0; green, 0; blue, 0 }  ][line width=0.08]  [draw opacity=0] (10.72,-5.15) -- (0,0) -- (10.72,5.15) -- (7.12,0) -- cycle    ;
\draw    (79.73,123.02) ;
\draw [shift={(79.2,124.19)}, rotate = 294.09] [color={rgb, 255:red, 0; green, 0; blue, 0 }  ][line width=0.75]    (6.56,-2.94) .. controls (4.17,-1.38) and (1.99,-0.4) .. (0,0) .. controls (1.99,0.4) and (4.17,1.38) .. (6.56,2.94)   ;

\draw (52.38,125.57) node [anchor=north west][inner sep=0.75pt]  [font=\large]  {$\alpha _{0}$};
\draw (212.38,125.55) node [anchor=north west][inner sep=0.75pt]  [font=\large]  {$\alpha _{1}$};
\draw (133.78,97.3) node [anchor=north west][inner sep=0.75pt]  [font=\large]  {$\pi $};

\end{tikzpicture}}
\end{minipage}
\begin{rem}
There is a version of the \ref{eq:P37sys} with a non-abelian constant \cite{bobrova2022classification}:
\begin{align}
    &
    \left\{
    \begin{array}{rcl}
         t \, \dot q
         &=& 2 q p q + \kappa_1 q + t \, h
         ,  
         \\[2mm]
         t \, \dot p
         &=& - 2 p q p - \kappa_1 p - \kappa_3
         .
    \end{array}
    \right.
\end{align}
When $h$ is commutative, it can be obtained from the \ref{eq:P37sys} by the scaling
\begin{align}
    &&
    q
    &\mapsto \kappa_3 \, q,
    &
    p 
    &\mapsto \kappa_3^{-1} p,
    &
    t 
    &\mapsto (\kappa_3 \, h)^{-1} t
    .
    &&
\end{align}
Since $\kappa_3$ is an inessential parameter, we set $\kappa_3 = 1$. The system with $h \in \mathcal{R}$ has the same Bäcklund transformations as in Table \ref{tab:BT_P37} except of the $s_0$- and $\pi$-elements, which now read as
\begin{align}
    s_0 \brackets{
    q, \, 
    p, \, 
    t; \, \alpha_0, \alpha_1
    }
    &= \brackets{
    h \, q \, h^{-1}, \,
    h \, (
    p - \alpha_0 \, q^{-1} + t \, q^{-1} \, h \, q^{-1}
    ) \, h^{-1}, \,
    - t; \,
    - \alpha_0, 
    \alpha_1 + 2 \alpha_0
    },
    \\[1mm]
    \pi \brackets{
    q, \, p, \, t; \, \alpha_0, \alpha_1
    }
    &= \brackets{
    t \, h \ p, \,
    - t^{-1} q \, h^{-1}, \,
    - t; \,
    \alpha_1, 
    \alpha_0
    }.
\end{align}
\end{rem}

Similar to Subsection \ref{sec:dPE7}, we consider the translation operator $T = \pi s_1$. Then, the $T$-dynamic reads as
\begin{align}
    T \brackets{
    q, \, p, \, t; \, \alpha_0, \alpha_1
    }
    &= \brackets{
    \alpha_0 \bar p^{-1} - \bar p^{-2} - t p, \,
    t^{-1} q, \,
    t; \,
    \alpha_0 + 1,
    \alpha_1 - 1
    }
\end{align}
and can be rewritten in the canonical form
\begin{align}
    \tag*{\ref{eq:dPD7}}
    \bar f + f 
    + \LieBrackets{g \, \bar f, g^{-1}}
    &= - \alpha_1 - t g^{-1},
    &
    \bar g \, g
    &= t \bar f,
    &
    \bar{\alpha}_1
    &= \alpha_1 - 1,
\end{align}
where $g := t p$ and $\bar{f} := q p$. Its commutative version is given by eqs. (2.37) -- (2.38) in \cite{sakai2007problem}.
\begin{proof}
Note that for the $p$-dynamic we have
\begin{align}
    &&
    \bar p
    &= t^{-1} q
    &
    \Leftrightarrow&
    &
    \bar{t p} \, (t p)
    &= t q p.
    &&
\end{align}
Setting $g := t p$ and $\bar{f} := q p$, it becomes $\bar {g} \, g = t \bar{f}$. Let us find $f = \underline{q} \, \underline{p}$. One can compute
\begin{align}
    \underline{q}
    &= t p, 
    &
    \underline{p}
    &= - \alpha_1 \underline{q}^{-1} 
    - t \underline{q}^{-2}
    - t^{-1} q,
\end{align}
then
\begin{align}
    \underline{q} \, \underline{p}
    &= \underline{q} \, \brackets{
    - \alpha_1 \underline{q}^{-1} 
    - t \underline{q}^{-2}
    - t^{-1} q
    }
    = - \alpha_1 - t \underline{q}^{-1} - t^{-1} \underline{q} \, q
    = - \alpha_1 - t g^{-1} - p q
    ,
\end{align}
or, equivalently,
\begin{align}
    \bar{f} + f
    + \LieBrackets{g \, \bar{f}, g^{-1}}
    &= - \alpha_1 - t g^{-1},
\end{align}
since $p q - q p = p q p \,\, p^{-1} - p^{-1} \,\, p q p = \LieBrackets{t p \,\, q p, t^{-1} p^{-1}} = \LieBrackets{g \, \bar{f}, g^{-1}}$. As a result, we obtain the \ref{eq:dPD7}.
\end{proof}

The related difference system
\begin{align}
    &&
    &\left\{
    \begin{array}{rcl}
         f_{n + 1} + f_{n}
         + \LieBrackets{g_n f_{n + 1}, g_n^{-1}}
         &=&  - \alpha_{1, n} 
         - t g_n^{-1},
         \\[2mm]
         g_{n} \, g_{n - 1}
         &=& t f_n, 
    \end{array}
    \right.
    &
    \alpha_{1, n} 
    &= \alpha_{1} - n
    &&
\end{align}
reduces to the second-order difference equation
\begin{align}
    \tag*{alt-d-P$_1'$}
    \label{eq:altdP1'}
    &&
    g_{n + 1} + g_{n - 1}
    &= - \alpha_{1, n} t \, g_n^{-1} - t^2 \, g_n^{-2},
    &
    \alpha_{1, n} 
    &= \alpha_1 - n.
    &&
\end{align}
The latter is a non-abelian version of one of the alt-d-P$_1$ equations (see, e.g., Appendix A.4 in \cite{van2007discrete}).

\begin{rem}
After the transformation
\begin{align}
    f
    &= - \tfrac12 + \varepsilon^2 \, Q - \tfrac16 \, \varepsilon^3 \, P,
    &
    g
    &= 1 - \varepsilon^2 \, Q - \tfrac13 \, \varepsilon^3 \, P,
    &
    t
    &= - 2 + \tfrac13 \, \varepsilon^4 \, T,
    &
    \alpha_1
    &= 3 - \tfrac23 \, \varepsilon^4 \, T,
\end{align}
the \ref{eq:dPD7} system becomes the \ref{eq:P1sys} in the limit $\varepsilon \to 0$.
\end{rem}

\subsection{\texorpdfstring{$d$-P$(D_6)$}{dPD6}}
\label{sec:dPD6}
One of the non-abelian analogs of the third \Painleve equation may be written in the form
\begin{align}
    \tag*{P$_{3,6}$}
    \label{eq:P36sys}
    &
    \left\{
    \begin{array}{rcl}
         t \, \dot q
         &=& 2 q p q - q^2 + (\alpha_1 + \beta_1) q + t
         ,  
         \\[2mm]
         t \, \dot p
         &=& - 2 p q p + q p + p q - (\alpha_1 + \beta_1) p + \alpha_1
         .
    \end{array}
    \right.
\end{align}
Recall that $t$ is also a central element. Let us set $\alpha_0 + \alpha_1 = 1$ and $\beta_0 + \beta_1 = 1$. Then the system has the following Bäcklund transformations (cf. with \cite{bershtein2023hamiltonian}), where $\tilde f := \pi s_1 \pi (f)$ and $\tilde f' := \pi' s_1' \pi' (f)$.
\begin{table}[H]
    \centering
    \begin{tabular}{c||cccc|ccc}
         & $\alpha_0$
         & $\alpha_1$
         & $\beta_0$
         & $\beta_1$
         & $q$
         & $p$
         & $t$
         \\[1mm]
         \hline\hline
         $s_0$
         & $- \alpha_0$
         & $\alpha_1 + 2 \alpha_0$
         & $\beta_0$
         & $\beta_1$
         & $\tilde q$
         & $\tilde p$
         & $t$
         \\[1mm]
         $s_1$
         & $\alpha_0 + 2 \alpha_1$
         & $- \alpha_1$
         & $\beta_0$
         & $\beta_1$
         & $q + \alpha_1 p^{-1}$
         & $p$
         & $t$
         \\[1mm]
         $s_0'$
         & $\alpha_0$
         & $\alpha_1$
         & $- \beta_0$
         & $\beta_1 + 2 \beta_0$
         & $\tilde q'$
         & $\tilde p'$
         & $t$
         \\[1mm]
         $s_1'$
         & $\alpha_0$
         & $\alpha_1$
         & $\beta_0 + 2 \beta_1$
         & $- \beta_1$
         & $q + \beta_1 (p - 1)^{-1}$
         & $p$
         & $t$
         \\[1mm]
         \hline
         $\pi$
         & $\alpha_1$
         & $\alpha_0$
         & $\beta_0$
         & $\beta_1$
         & $- t q^{-1 \phantom{\frac{1}{}}}$
         & $t^{-1} \brackets{q (p - 1) q + \beta_1 q}$
         & $t$
         \\[1mm]
         $\pi'$
         & $\alpha_0$
         & $\alpha_1$
         & $\beta_1$
         & $\beta_0$
         & $t q^{-1}$
         & $- t^{-1} \brackets{q p q + \alpha_1 q}$
         & $t$
         \\[1mm]
         \hline
         $\sigma$
         & $\beta_0$
         & $\beta_1$
         & $\alpha_0$
         & $\alpha_1$
         & $- q$
         & $1 - p$
         & $- t$
    \end{tabular}
    \caption{BT for the \ref{eq:P36sys} system}
    \label{tab:BT_P36}
\end{table}
\vspace{-2mm}
\hspace{-0.5cm}
\begin{minipage}[l]{0.65\textwidth}
Regarding $\alpha_0$, $\alpha_1$ and $\beta_0$, $\beta_1$ as the root variables, these Bäcklund transformations form an extended affine Weyl group of type $2 A_1^{(1)}$, where $\tilde \pi := \sigma \pi' \pi \sigma$:
\begin{gather}
    \label{eq:W2A1}
    \begin{gathered}
    \widetilde{W}(2 A_1^{(1)})
    = \angleBrackets{s_0, s_1, s_0', s_1'; \tilde \pi},
    \\[1mm]
    \begin{aligned}
    s_i^2
    = 1,&
    &&&&&&&
    {s_i'}^2
    &= 1,
    &&&&&&&
    &\tilde \pi^2 
    = 1,
    \\[1mm]
    \tilde \pi s_i
    = s_{i + 1} \tilde \pi,&
    &&&&&&&
    \tilde \pi s_i'
    &= s_{i + 1}' \tilde \pi,
    &&&&&&&
    &i 
    \in \quot{\mathbb{Z}}{2 \mathbb{Z}}.
    \end{aligned}
    \end{gathered}
\end{gather}
\end{minipage}
\begin{minipage}[l]{0.35\textwidth}
\begin{figure}[H]
    \vspace{-5mm}
    \centering
    \scalebox{0.85}{\tikzset{every picture/.style={line width=0.75pt}} 

\begin{tikzpicture}[x=0.75pt,y=0.75pt,yscale=-1,xscale=1]

\draw  [dash pattern={on 0.84pt off 2.51pt}] (221,135.5) .. controls (221,96.01) and (253.01,64) .. (292.5,64) .. controls (331.99,64) and (364,96.01) .. (364,135.5) .. controls (364,174.99) and (331.99,207) .. (292.5,207) .. controls (253.01,207) and (221,174.99) .. (221,135.5) -- cycle ;
\draw  [fill={rgb, 255:red, 0; green, 0; blue, 0 }  ,fill opacity=1 ] (225.73,170.16) .. controls (225.73,167.72) and (227.71,165.75) .. (230.14,165.75) .. controls (232.58,165.75) and (234.56,167.72) .. (234.56,170.16) .. controls (234.56,172.6) and (232.58,174.57) .. (230.14,174.57) .. controls (227.71,174.57) and (225.73,172.6) .. (225.73,170.16) -- cycle ;
\draw  [fill={rgb, 255:red, 0; green, 0; blue, 0 }  ,fill opacity=1 ] (349.73,170.16) .. controls (349.73,167.72) and (351.71,165.75) .. (354.14,165.75) .. controls (356.58,165.75) and (358.56,167.72) .. (358.56,170.16) .. controls (358.56,172.6) and (356.58,174.57) .. (354.14,174.57) .. controls (351.71,174.57) and (349.73,172.6) .. (349.73,170.16) -- cycle ;
\draw    (358,164.6) ;
\draw [shift={(357.44,166)}, rotate = 291.9] [color={rgb, 255:red, 0; green, 0; blue, 0 }  ][line width=0.75]    (6.56,-2.94) .. controls (4.17,-1.38) and (1.99,-0.4) .. (0,0) .. controls (1.99,0.4) and (4.17,1.38) .. (6.56,2.94)   ;
\draw    (351.19,176.38) -- (351.22,176.34) ;
\draw [shift={(352.38,174.71)}, rotate = 125.49] [color={rgb, 255:red, 0; green, 0; blue, 0 }  ][line width=0.75]    (6.56,-2.94) .. controls (4.17,-1.38) and (1.99,-0.4) .. (0,0) .. controls (1.99,0.4) and (4.17,1.38) .. (6.56,2.94)   ;
\draw    (243.55,168.63) -- (340.83,168.3)(243.56,171.63) -- (340.84,171.3) ;
\draw [shift={(349.83,169.77)}, rotate = 179.8] [fill={rgb, 255:red, 0; green, 0; blue, 0 }  ][line width=0.08]  [draw opacity=0] (10.72,-5.15) -- (0,0) -- (10.72,5.15) -- (7.12,0) -- cycle    ;
\draw [shift={(234.56,170.16)}, rotate = 359.8] [fill={rgb, 255:red, 0; green, 0; blue, 0 }  ][line width=0.08]  [draw opacity=0] (10.72,-5.15) -- (0,0) -- (10.72,5.15) -- (7.12,0) -- cycle    ;
\draw  [fill={rgb, 255:red, 0; green, 0; blue, 0 }  ,fill opacity=1 ] (225.4,101.49) .. controls (225.4,99.06) and (227.37,97.08) .. (229.81,97.08) .. controls (232.25,97.08) and (234.22,99.06) .. (234.22,101.49) .. controls (234.22,103.93) and (232.25,105.91) .. (229.81,105.91) .. controls (227.37,105.91) and (225.4,103.93) .. (225.4,101.49) -- cycle ;
\draw  [fill={rgb, 255:red, 0; green, 0; blue, 0 }  ,fill opacity=1 ] (349.4,101.49) .. controls (349.4,99.06) and (351.37,97.08) .. (353.81,97.08) .. controls (356.25,97.08) and (358.22,99.06) .. (358.22,101.49) .. controls (358.22,103.93) and (356.25,105.91) .. (353.81,105.91) .. controls (351.37,105.91) and (349.4,103.93) .. (349.4,101.49) -- cycle ;
\draw    (243.22,99.96) -- (340.49,99.63)(243.23,102.96) -- (340.51,102.63) ;
\draw [shift={(349.5,101.1)}, rotate = 179.8] [fill={rgb, 255:red, 0; green, 0; blue, 0 }  ][line width=0.08]  [draw opacity=0] (10.72,-5.15) -- (0,0) -- (10.72,5.15) -- (7.12,0) -- cycle    ;
\draw [shift={(234.22,101.49)}, rotate = 359.8] [fill={rgb, 255:red, 0; green, 0; blue, 0 }  ][line width=0.08]  [draw opacity=0] (10.72,-5.15) -- (0,0) -- (10.72,5.15) -- (7.12,0) -- cycle    ;
\draw    (358.2,107.52) -- (358.17,107.46) ;
\draw [shift={(357.2,105.72)}, rotate = 60.95] [color={rgb, 255:red, 0; green, 0; blue, 0 }  ][line width=0.75]    (6.56,-2.94) .. controls (4.17,-1.38) and (1.99,-0.4) .. (0,0) .. controls (1.99,0.4) and (4.17,1.38) .. (6.56,2.94)   ;
\draw    (226.76,107.88) -- (226.77,107.85) ;
\draw [shift={(227.66,106.06)}, rotate = 116.2] [color={rgb, 255:red, 0; green, 0; blue, 0 }  ][line width=0.75]    (6.56,-2.94) .. controls (4.17,-1.38) and (1.99,-0.4) .. (0,0) .. controls (1.99,0.4) and (4.17,1.38) .. (6.56,2.94)   ;
\draw    (227.15,164.71) ;
\draw [shift={(227.92,166.25)}, rotate = 243.43] [color={rgb, 255:red, 0; green, 0; blue, 0 }  ][line width=0.75]    (6.56,-2.94) .. controls (4.17,-1.38) and (1.99,-0.4) .. (0,0) .. controls (1.99,0.4) and (4.17,1.38) .. (6.56,2.94)   ;
\draw    (351.46,95.17) ;
\draw [shift={(352.38,96.56)}, rotate = 236.31] [color={rgb, 255:red, 0; green, 0; blue, 0 }  ][line width=0.75]    (6.56,-2.94) .. controls (4.17,-1.38) and (1.99,-0.4) .. (0,0) .. controls (1.99,0.4) and (4.17,1.38) .. (6.56,2.94)   ;
\draw    (233,95.72) ;
\draw [shift={(232.18,96.76)}, rotate = 308.11] [color={rgb, 255:red, 0; green, 0; blue, 0 }  ][line width=0.75]    (6.56,-2.94) .. controls (4.17,-1.38) and (1.99,-0.4) .. (0,0) .. controls (1.99,0.4) and (4.17,1.38) .. (6.56,2.94)   ;
\draw    (233.73,176.47) ;
\draw [shift={(232.64,175.02)}, rotate = 53.13] [color={rgb, 255:red, 0; green, 0; blue, 0 }  ][line width=0.75]    (6.56,-2.94) .. controls (4.17,-1.38) and (1.99,-0.4) .. (0,0) .. controls (1.99,0.4) and (4.17,1.38) .. (6.56,2.94)   ;

\draw (202.04,165.9) node [anchor=north west][inner sep=0.75pt]  [font=\large]  {$\alpha _{0}$};
\draw (366.04,165.88) node [anchor=north west][inner sep=0.75pt]  [font=\large]  {$\alpha _{1}$};
\draw (286.31,190.33) node [anchor=north west][inner sep=0.75pt]  [font=\large,rotate=-359.84]  {$\pi $};
\draw (201.71,97.23) node [anchor=north west][inner sep=0.75pt]  [font=\large]  {$\beta _{0}$};
\draw (365.71,97.22) node [anchor=north west][inner sep=0.75pt]  [font=\large]  {$\beta _{1}$};
\draw (285.98,69.66) node [anchor=north west][inner sep=0.75pt]  [font=\large,rotate=-359.84]  {$\pi '$};
\draw (356.6,128.91) node [anchor=north west][inner sep=0.75pt]  [font=\large,rotate=-90]  {$\sigma $};
\draw (229.06,140.25) node [anchor=north west][inner sep=0.75pt]  [font=\large,rotate=-270]  {$\sigma $};

\end{tikzpicture}}
\end{figure}
\end{minipage}

\begin{rem}
It is known that Bäcklund transformations of the third \Painleve equation form an affine Weyl group of type~$B_2^{(1)}$ \cite{okam4}. Indeed, by choosing another root variables and considering the reflections
\begin{align}
    &&
    \tilde \alpha_0 + 2 \tilde \alpha_1 + \tilde \alpha_2
    &= 1,
    &
    \tilde \alpha_1
    &:= \beta_1,
    &
    \tilde \alpha_2
    &:= \alpha_1 - \beta_1,
    &&&
    \tilde s_0
    &:= \sigma s_1' \pi' s_1' s_1 \pi s_1,
    &
    \tilde s_1
    &:= s_1',
    &
    \tilde s_2
    &:= \sigma,
    &&
\end{align}
one can obtain a group isomorphic to $W(B_2^{(1)})$:

\vspace{2mm}
\hspace{-5mm}
\begin{minipage}[l]{0.65\textwidth}
\begin{gather}
    \label{eq:WB2}
    \begin{gathered}
    {W}(B_2^{(1)})
    = \angleBrackets{\tilde s_0, \tilde s_1, \tilde s_2},
    \\[1mm]
    \begin{aligned}
    \tilde s_i^2
    &= 1,
    &&&&&
    (\tilde s_0 \, \tilde s_{2})^2
    &= 1,
    &&&&&
    (\tilde s_i \, \tilde s_{1})^4
    &= 1,
    &&&&&
    i 
    &= 0, 1, 2.
    \end{aligned}
    \end{gathered}
\end{gather}
\end{minipage}
\begin{minipage}[l]{0.35\textwidth}
\begin{figure}[H]
    \vspace{-5mm}
    \centering
    \scalebox{0.85}{\tikzset{every picture/.style={line width=0.75pt}} 

\begin{tikzpicture}[x=0.75pt,y=0.75pt,yscale=-1,xscale=1]

\draw  [fill={rgb, 255:red, 0; green, 0; blue, 0 }  ,fill opacity=1 ] (237.6,125.3) .. controls (237.6,127.73) and (235.62,129.71) .. (233.18,129.71) .. controls (230.75,129.71) and (228.77,127.73) .. (228.77,125.29) .. controls (228.77,122.86) and (230.75,120.88) .. (233.19,120.88) .. controls (235.62,120.88) and (237.6,122.86) .. (237.6,125.3) -- cycle ;
\draw  [fill={rgb, 255:red, 0; green, 0; blue, 0 }  ,fill opacity=1 ] (80.93,125.5) .. controls (80.93,127.93) and (78.95,129.91) .. (76.52,129.91) .. controls (74.08,129.91) and (72.11,127.93) .. (72.11,125.49) .. controls (72.11,123.06) and (74.08,121.08) .. (76.52,121.08) .. controls (78.96,121.08) and (80.93,123.06) .. (80.93,125.5) -- cycle ;
\draw [line width=0.75]    (228.27,127.02) -- (167.65,127.01)(228.27,124.02) -- (167.65,124.01) ;
\draw [shift={(158.65,125.51)}, rotate = 0.01] [fill={rgb, 255:red, 0; green, 0; blue, 0 }  ][line width=0.08]  [draw opacity=0] (10.72,-5.15) -- (0,0) -- (10.72,5.15) -- (7.12,0) -- cycle    ;
\draw  [fill={rgb, 255:red, 0; green, 0; blue, 0 }  ,fill opacity=1 ] (158.65,125.51) .. controls (158.65,127.94) and (156.67,129.92) .. (154.23,129.92) .. controls (151.8,129.92) and (149.82,127.94) .. (149.82,125.51) .. controls (149.82,123.07) and (151.8,121.09) .. (154.24,121.09) .. controls (156.67,121.09) and (158.65,123.07) .. (158.65,125.51) -- cycle ;
\draw [line width=0.75]    (140.82,127) -- (80.2,127)(140.82,124) -- (80.2,124) ;
\draw [shift={(149.82,125.51)}, rotate = 180.01] [fill={rgb, 255:red, 0; green, 0; blue, 0 }  ][line width=0.08]  [draw opacity=0] (10.72,-5.15) -- (0,0) -- (10.72,5.15) -- (7.12,0) -- cycle    ;

\draw (69.38,135.57) node [anchor=north west][inner sep=0.75pt]  [font=\large]  {$\tilde{\alpha }_{0}$};
\draw (227.38,135.55) node [anchor=north west][inner sep=0.75pt]  [font=\large]  {$\tilde{\alpha }_{2}$};
\draw (147.63,135.6) node [anchor=north west][inner sep=0.75pt]  [font=\large]  {$\tilde{\alpha }_{1}$};

\end{tikzpicture}}
\end{figure}
\end{minipage}
\end{rem}
\begin{rem}
The \ref{eq:P36sys} system may contain a non-abelian constant $h$ \cite{bobrova2022classification}
\begin{align}
    &
    \left\{
    \begin{array}{rcl}
         t \, \dot q
         &=& 2 q p q + \kappa_1 q + \kappa_2 q^2 + t \, h
         ,  
         \\[2mm]
         t \, \dot p
         &=& - 2 p q p - \kappa_1 p - \kappa_2 q p - \kappa_2 p q - \kappa_3
         .
    \end{array}
    \right.
\end{align}
It reduces to a system of the \ref{eq:P36sys} form by the map 
\begin{align}
    &&
    q
    &\mapsto \kappa_2 \, q,
    &
    p
    &\mapsto \kappa_2^{-1} p,
    &
    t
    &\mapsto \kappa_2 \kappa_4 \, t
    .
    &&
\end{align}
We set $\kappa_2 = 1$, since it is inessential. Then, the system admits the same Bäcklund transformations except of $s_0$, $s_0'$, $\pi$, and $\pi'$ generators. In particular, 
\begin{align}
    \pi \brackets{
    q, \, p, \, t; \, \alpha_0, \, \alpha_1, \, \beta_0, \, \beta_1
    }
    &= \brackets{
    - t q^{-1} \, h, \, 
    t^{-1} \, h^{-1} \, \brackets{
    q (p - 1) q
    + \beta_1 q
    }, \, 
    t; \, \alpha_1, \, 
    \alpha_0, \,
    \beta_0, \,
    \beta_1
    }
    ,
    \\[2mm]
    \pi' \brackets{
    q, \, p, \, t; \, \alpha_0, \, \alpha_1, \, \beta_0, \, \beta_1
    }
    &= \brackets{
    t q^{-1} \, h, \, 
    - t^{-1} \, h^{-1} \, \brackets{
    q p q
    + \alpha_1 q
    }, \, 
    t; \, 
    \alpha_0, \,
    \alpha_1, \,
    \beta_1, \, 
    \beta_0
    }
    .
\end{align}
\end{rem}

Turning to the discrete systems, let us determine two independent directions on the lattice $(\alpha_0, \alpha_1, \beta_0 \beta_1)$:
\begin{align}
    T
    &:= \pi' \sigma s_1 \sigma,
    &
    T (\alpha_0, \alpha_1, \beta_0, \beta_1)
    &= (\alpha_0, \alpha_1, \beta_0 + 1, \beta_1 - 1),
    \\
    T'
    &:= (\sigma s_1)^2 \pi' \pi
    ,
    &
    T' (\alpha_0, \alpha_1, \beta_0, \beta_1)
    &= (\alpha_0 - 1, \alpha_1 + 1, \beta_0 - 1, \beta_1 + 1).
\end{align}

\medskip
\subsubsection{{$T$-direction}}
\label{sec:dPD6_1}
Consider the translation operator $T = \pi' \sigma s_1 \sigma$, which acts on the parameters and variables by the rule
\begin{align}
    T_1 \brackets{
    q, \, p, \, t; \, \alpha_0, \, \alpha_1, \, \beta_0, \, \beta_1
    }
    &= \brackets{
    t q^{-1} + \bar \beta_1 (1 - \bar p)^{-1}, \,
    - t^{-1} (\alpha_1 q + q p q), \, 
    t; \,
    \alpha_0, \,
    \alpha_1, \,
    \beta_0 + 1, \,
    \beta_1 - 1
    }.
\end{align}
Introducing the variables $f:= q$ and $\bar g := t q^{-1} (1 - \bar p)$, it takes the form
\begin{align}
    \tag*{\ref{eq:dPD6}}
    &&
    \bar f \, f
    &= t + \bar \beta_1 t \bar g^{-1},
    &
    \bar g
    + g
    + [f \, \bar g, f^{-1}]
    &= \alpha_1 - \beta_1 + f + t f^{-1},
    &
    \bar \alpha_1
    &= \alpha_1,
    &
    \bar \beta_1
    &= \beta_1 - 1.
    &&
\end{align}
In the commutative case, this system corresponds to a discrete \Painleve equation related to the $D_6^{(1)}$-surface (see page 205 in \cite{sakai2001rational}).
\begin{proof}
Indeed, the $q$-dynamic is 
\begin{align}
    \bar q
    &= t q^{-1} + \bar \beta_1 (1 - \bar p)^{-1}
    = t q^{-1} + \bar \beta_1 \brackets{
    1 + t^{-1} (\alpha_1 q + q p q)
    }^{-1},
\end{align}
or, equivalently,
\begin{align}
    \bar q \, q
    &= t + \bar \beta_1 t \brackets{
    t q^{-1} + \alpha_1 + p q
    }^{-1}
    .
\end{align}
The latter expression suggests us the change of variables. Let us introduce $\bar{g} := t q^{-1} + \alpha_1 + p q \equiv t q^{-1} (1 - \bar p)$ and compute~$g$. Note that
\begin{align}
    &&
    \bar q
    &= t q^{-1} + \bar \beta_1 (1 - \bar p)^{-1}
    &
    \Leftrightarrow&
    &
    t \underline{q}^{-1}
    &= q - \beta_1 (1 - p)^{-1}.
    &&
\end{align}
Then
\begin{align}
    g
    &= t \underline{q}^{-1} (1 - p)
    = \brackets{
    q - \beta_1 (1 - p)^{-1}
    } (1 - p)
    = q - q p - \beta_1
    &
    \Rightarrow&
    &
    \bar{g} + g
    &= t q^{-1} + \alpha_1 
    + p q
    + q - q p - \beta_1.
\end{align}
By using the $p$-dynamic, one can compute 
$$p q - q p = t (\bar p q^{-1} - q^{-1} \bar p) = \bar g - q \, \bar g \, q^{-1} = [q^{-1}, q \, \bar g].$$ 
Therefore, we arrive at the wanted system.
\end{proof} 

The corresponding system of difference equations reads
\begin{align}
    &&
    &\left\{
    \begin{array}{rcl}
         f_{n} \, f_{n - 1}
         &=&  t + \beta_{1, n} t g_n^{-1}, 
         \\[2mm]
         g_{n + 1} + g_{n}
         + \left[ f_n \, g_{n + 1}, f_n^{-1} \right]
         &=& \alpha_{1, n} - \beta_{1, n}
         + f_n + t f_n^{-1}, 
    \end{array}
    \right.
    &
    \alpha_{1, n} 
    &= \alpha_{1},
    &
    \beta_{1, n}
    &= \beta_1 - n
    &&
\end{align}
and can be reduced to a non-commutative analog of the alt-d-P$_2$ equation:
\begin{align}
    \tag*{alt-d-P$_2$}
    \label{eq:altdP2}
    \begin{aligned}
    \beta_{1, n + 1} t (f_{n + 1} f_n - t)^{-1}
    &+ \beta_{1, n} t (f_{n} f_{n - 1} - t)^{-1}
    = \alpha_1 - \beta_{1, n}
    \\[1mm]
    &
    \begin{aligned}
    &+ \, \beta_{1, n + 1} t \LieBrackets{
    f_n^{-1}, f_n (f_{n + 1} f_{n} - t)^{-1}
    }
    + f_n + t f_n^{-1}
    ,
    &
    \beta_{1, n} 
    &= \beta_1 - n.
    \end{aligned}
    \end{aligned}
\end{align}
Its commutative version can be found, for instance, in Appendix A.4 of the paper \cite{van2007discrete}. 

Note that a continuous limit given by the transformation
\begin{gather}
\begin{aligned}
    f
    &= 1 + \varepsilon \, Q + \tfrac14 \, \varepsilon^2 \, P,
    &&&
    g
    &= - 1 + \varepsilon \, Q - \tfrac14 \, \varepsilon^2 \, P,
    &&&
    t
    &= - 1 - \tfrac12 \, \varepsilon^2 \, T,
\end{aligned}
\\[1mm]
\begin{aligned}
    \alpha_1
    &= - 2 + \tfrac12 (\rm{A}_1 - 1) \, \varepsilon^3,
    &&&
    \beta_1
    &= \tfrac14 (\mathrm{A}_1 - 1) \, \varepsilon^3
\end{aligned}
\end{gather}
brings the \ref{eq:dPD6} system to another non-abelian version of the P$_2$ equation obtained in \cite{Adler_Sokolov_2020_1}:
\begin{align}
    &
    \left\{
    \begin{array}{rcr}
         \dot Q
         &=& - Q^2 + P - \tfrac12 T,  
         \\[2mm]
         \dot P
         &=& 3 Q P - P Q + A_1.
    \end{array}
    \right.
\end{align}

\medskip
\subsubsection{{$T'$-direction}}
\label{sec:dPD6_2}
The operator $T' = (\sigma s_1)^2 \pi' \pi$ leads to the \ref{eq:dPD6'} dynamics
\begin{align}
    T' 
    &\brackets{
    q, \, p, \, t; \, \alpha_0, \, \alpha_1, \, \beta_0, \, \beta_1
    }
    \\
    &= \brackets{
    - q - \alpha_1 p^{-1} + \beta_1 (1 - p)^{-1}, \,
    1 - p - (1 + \alpha_1 + \beta_1) \bar q^{-1} - t \bar q^{-2}, \, 
    t; \,
    \alpha_0 - 1, \,
    \alpha_1 + 1, \,
    \beta_0 - 1, \, 
    \beta_1 + 1
    },
\end{align}
which is a non-abelian analog of the discrete dynamics given by eq. (8.240) in the paper \cite{kajiwara2017geometric}. 
The corresponding difference system reads
\begin{align}
    &&
    &\left\{
    \begin{array}{rcl}
         q_{n + 1} + q_{n}
         &=&  - \alpha_{1, n} p_n^{-1} 
         + \beta_{1, n} (1 - p_n)^{-1},
         \\[2mm]
         p_{n} + p_{n - 1}
         &=& 1 - (\alpha_{1, n} + \beta_{1, n} - 1) q_n^{-1}
         - t q_n^{-2}, 
    \end{array}
    \right.
    &
    \alpha_{1, n} 
    &= \alpha_{1} + n,
    &
    \beta_{1, n}
    &= \beta_1 + n
    .
    &&
\end{align}

\subsection{\texorpdfstring{$d$-P$(D_5)$}{dPD5}}
\label{sec:dPD5}
The corresponding abelian discrete equation is related to the P$_5$ system, whose non-abelian analog may be written as follows \cite{Kawakami_2015}
\begin{align}
    \tag*{P$_5$}
    \label{eq:P5sys}
    &
    \left\{
    \begin{array}{rcl}
         t \, \dot q
         &=& 2 q p q 
         - q p - p q - (\alpha_1 + \alpha_3) q
         + \alpha_1 
         + t q (q - 1)
         ,  
         \\[2mm]
         t \, \dot p
         &=& - 2 p q p + p^2 + (\alpha_1 + \alpha_3) p
         + t (p q + q p - p + \alpha_2)
         .
    \end{array}
    \right.
\end{align}
Here $p$ and $q$ are elements of $\mathcal{R}$. Set $\alpha_0 + \alpha_1 + \alpha_2 + \alpha_3 = 1$. Bäcklund transformations of the system are listed in the table below (cf. with \cite{bershtein2023hamiltonian} and Theorem \ref{thm:Aldsys}).
\begin{table}[H]
    \centering
    \begin{tabular}{c||cccc|ccc}
         & $\alpha_0$
         & $\alpha_1$
         & $\alpha_2$
         & $\alpha_3$
         & $q$
         & $p$
         & $t$
         \\[1mm]
         \hline\hline
         $s_0$
         & $- \alpha_0$
         & $\alpha_1 + \alpha_0$
         & $\alpha_2$
         & $\alpha_3 + \alpha_0$
         & $q + \alpha_0 (p + t)^{-1\phantom{\frac{1}{1}}}$
         & $p$
         & $t$
         \\[2mm]
         $s_1$
         & $\alpha_0 + \alpha_1$
         & $- \alpha_1$
         & $\alpha_2 + \alpha_1$
         & $\alpha_3$
         & $q$
         & $p - \alpha_1 q^{-1}$
         & $t$
         \\[2mm]
         $s_2$
         & $\alpha_0$
         & $\alpha_1 + \alpha_2$
         & $- \alpha_2$
         & $\alpha_3 + \alpha_2$
         & $q + \alpha_2 p^{-1}$
         & $p$
         & $t$
         \\[2mm]
         $s_3$
         & $\alpha_0 + \alpha_3$
         & $\alpha_1$
         & $\alpha_2 + \alpha_3$
         & $- \alpha_3$
         & $q$
         & $p - \alpha_3 (q - 1)^{-1}$
         & $t$
         \\[1mm]
         \hline
         $\pi$
         & $\alpha_1$
         & $\alpha_2$
         & $\alpha_3$
         & $\alpha_0$
         & $- t^{-1\phantom{\frac{1}{1}}} p$
         & $t q$
         & $- t$
         \\[2mm]
         $\pi'$
         & $\alpha_2$
         & $\alpha_1$
         & $\alpha_0$
         & $\alpha_3$
         & $q$
         & $p + t$
         & $- t$
    \end{tabular}
    \caption{BT for the \ref{eq:P5sys} system}
    \label{tab:BT_P5}
\end{table}
They form an extended affine Weyl group of type $A_3^{(1)}$
\begin{gather}
    \label{eq:WA3}
    \begin{gathered}
    \widetilde{W}(A_3^{(1)})
    = \angleBrackets{s_0, s_1, s_2, s_3; \pi},
    \\[1mm]
    \begin{aligned}
    (s_i \, s_{j})^2
    &= 1 \,\,\, (j \neq i \pm 1),
    &&&&&
    (s_i \, s_{i + 1})^3
    &= 1,
    &&&&&
    \pi^4
    &= 1,
    &&&&&
    \pi s_i
    &= s_{i + 1} \pi,
    &&&&&
    i, j
    &\in \quot{\mathbb{Z}}{4 \mathbb{Z}}.
    \end{aligned}
    \end{gathered}
\end{gather}
The corresponding Cartan matrix and Dynkin diagram are given by
\begin{table}[H]
    \vspace{-5mm}
    \begin{tabular}{ccc}
    $C
    = 
    \begin{pmatrix}
    2 & - 1 & 0 & - 1 \\[0.9mm] 
    - 1 & 2 & - 1 & 0 \\[0.9mm]
    0 & - 1 & 2 & - 1 \\[0.9mm]
    - 1 & 0 & - 1 & 2
    \end{pmatrix}
    $
    & \hspace{2cm} &
    \begin{tabular}{c}
    \\[-1mm]
    \scalebox{0.85}{\tikzset{every picture/.style={line width=0.75pt}} 

\begin{tikzpicture}[x=0.75pt,y=0.75pt,yscale=-1,xscale=1]

\draw  [dash pattern={on 0.84pt off 2.51pt}] (57.5,110.17) .. controls (57.5,70.68) and (89.51,38.67) .. (129,38.67) .. controls (168.49,38.67) and (200.5,70.68) .. (200.5,110.17) .. controls (200.5,149.66) and (168.49,181.67) .. (129,181.67) .. controls (89.51,181.67) and (57.5,149.66) .. (57.5,110.17) -- cycle ;
\draw  [fill={rgb, 255:red, 0; green, 0; blue, 0 }  ,fill opacity=1 ] (175.02,59.73) .. controls (175.02,57.29) and (177,55.32) .. (179.44,55.32) .. controls (181.87,55.32) and (183.85,57.29) .. (183.85,59.73) .. controls (183.85,62.17) and (181.87,64.14) .. (179.44,64.14) .. controls (177,64.14) and (175.02,62.17) .. (175.02,59.73) -- cycle ;
\draw  [fill={rgb, 255:red, 0; green, 0; blue, 0 }  ,fill opacity=1 ] (74.15,160.6) .. controls (74.15,158.17) and (76.13,156.19) .. (78.56,156.19) .. controls (81,156.19) and (82.98,158.17) .. (82.98,160.6) .. controls (82.98,163.04) and (81,165.02) .. (78.56,165.02) .. controls (76.13,165.02) and (74.15,163.04) .. (74.15,160.6) -- cycle ;
\draw  [fill={rgb, 255:red, 0; green, 0; blue, 0 }  ,fill opacity=1 ] (175.02,160.6) .. controls (175.02,158.17) and (177,156.19) .. (179.44,156.19) .. controls (181.87,156.19) and (183.85,158.17) .. (183.85,160.6) .. controls (183.85,163.04) and (181.87,165.02) .. (179.44,165.02) .. controls (177,165.02) and (175.02,163.04) .. (175.02,160.6) -- cycle ;
\draw    (183.83,155.43) ;
\draw [shift={(183.05,156.39)}, rotate = 309.44] [color={rgb, 255:red, 0; green, 0; blue, 0 }  ][line width=0.75]    (6.56,-2.94) .. controls (4.17,-1.38) and (1.99,-0.4) .. (0,0) .. controls (1.99,0.4) and (4.17,1.38) .. (6.56,2.94)   ;
\draw    (84,165.75) -- (83.8,165.59) ;
\draw [shift={(82.24,164.34)}, rotate = 38.8] [color={rgb, 255:red, 0; green, 0; blue, 0 }  ][line width=0.75]    (6.56,-2.94) .. controls (4.17,-1.38) and (1.99,-0.4) .. (0,0) .. controls (1.99,0.4) and (4.17,1.38) .. (6.56,2.94)   ;
\draw    (73.89,64.39) ;
\draw [shift={(74.91,63.21)}, rotate = 130.53] [color={rgb, 255:red, 0; green, 0; blue, 0 }  ][line width=0.75]    (6.56,-2.94) .. controls (4.17,-1.38) and (1.99,-0.4) .. (0,0) .. controls (1.99,0.4) and (4.17,1.38) .. (6.56,2.94)   ;
\draw   (78.56,59.73) -- (179.44,59.73) -- (179.44,160.6) -- (78.56,160.6) -- cycle ;
\draw  [fill={rgb, 255:red, 0; green, 0; blue, 0 }  ,fill opacity=1 ] (74.15,59.73) .. controls (74.15,57.29) and (76.13,55.32) .. (78.56,55.32) .. controls (81,55.32) and (82.98,57.29) .. (82.98,59.73) .. controls (82.98,62.17) and (81,64.14) .. (78.56,64.14) .. controls (76.13,64.14) and (74.15,62.17) .. (74.15,59.73) -- cycle ;
\draw    (174.45,55.27) ;
\draw [shift={(175.43,56.21)}, rotate = 223.79] [color={rgb, 255:red, 0; green, 0; blue, 0 }  ][line width=0.75]    (6.56,-2.94) .. controls (4.17,-1.38) and (1.99,-0.4) .. (0,0) .. controls (1.99,0.4) and (4.17,1.38) .. (6.56,2.94)   ;
\draw    (78.56,59.73) -- (78.56,156.19) ;
\draw    (179.44,64.14) -- (179.44,160.6) ;
\draw    (82.98,59.73) -- (179.44,59.73) ;
\draw    (78.56,160.6) -- (175.02,160.6) ;
\draw  [dash pattern={on 0.84pt off 2.51pt}]  (78.56,59.73) -- (179.44,160.6) ;
\draw    (84.17,65.29) -- (83.97,65.08) ;
\draw [shift={(82.55,63.67)}, rotate = 45] [color={rgb, 255:red, 0; green, 0; blue, 0 }  ][line width=0.75]    (6.56,-2.94) .. controls (4.17,-1.38) and (1.99,-0.4) .. (0,0) .. controls (1.99,0.4) and (4.17,1.38) .. (6.56,2.94)   ;
\draw    (174.44,155.82) ;
\draw [shift={(175.5,156.88)}, rotate = 225] [color={rgb, 255:red, 0; green, 0; blue, 0 }  ][line width=0.75]    (6.56,-2.94) .. controls (4.17,-1.38) and (1.99,-0.4) .. (0,0) .. controls (1.99,0.4) and (4.17,1.38) .. (6.56,2.94)   ;

\draw (46.26,156) node [anchor=north west][inner sep=0.75pt]  [font=\large]  {$\alpha _{1}$};
\draw (193.66,155.97) node [anchor=north west][inner sep=0.75pt]  [font=\large]  {$\alpha _{2}$};
\draw (46.27,55.58) node [anchor=north west][inner sep=0.75pt]  [font=\large,rotate=-0.45]  {$\alpha _{0}$};
\draw (122.81,168) node [anchor=north west][inner sep=0.75pt]  [font=\large,rotate=-359.84]  {$\pi $};
\draw (193.67,55.58) node [anchor=north west][inner sep=0.75pt]  [font=\large,rotate=-0.45]  {$\alpha _{3}$};
\draw (122.81,43.79) node [anchor=north west][inner sep=0.75pt]  [font=\large,rotate=-359.84]  {$\pi $};
\draw (62.21,113.91) node [anchor=north west][inner sep=0.75pt]  [font=\large,rotate=-270.91]  {$\pi $};
\draw (195.76,102.42) node [anchor=north west][inner sep=0.75pt]  [font=\large,rotate=-89.78]  {$\pi $};
\draw (137.5,92.5) node [anchor=north west][inner sep=0.75pt]  [font=\large,rotate=-45]  {$\pi '$};

\end{tikzpicture}}
    \end{tabular}
    \end{tabular}
    \vspace{-5mm}
\end{table}

Turning to the dyscrete dynamics, we consider two translations $T := (\pi_1 \pi_2)^2 s_0 s_2 s_1 s_3$ and $T' := s_3 s_0 s_1 \pi_2 \pi_1$ that act on the parameters~as
\begin{align}
    T \brackets{
    \alpha_0, \, \alpha_1, \, \alpha_2, \, \alpha_3
    }
    &= \brackets{
    \alpha_0 + 1, \, \alpha_1 - 1, \, \alpha_2 + 1, \, \alpha_3 - 1
    },
    \\[1mm]
    T' \brackets{
    \alpha_0, \, \alpha_1, \, \alpha_2, \, \alpha_3
    }
    &= \brackets{
    \alpha_0, \, \alpha_1, \, \alpha_2 - 1, \, \alpha_3 + 1
    }.
\end{align}

\medskip
\subsubsection{{$T$-direction}}
\label{sec:dPD5_1}
In this simple case we obtain the following discrete $q$, $p$ dynamics
\begin{align}
    \tag*{\ref{eq:dPD5}}
    \bar q + q
    &= 1 - \alpha_2 p^{-1} - \alpha_0 (p + t)^{-1},
    &
    \bar p + p
    &= - t + (\alpha_1 - 1) \bar q^{-1} + (\alpha_3 - 1) (\bar q - 1)^{-1},
\end{align}
which is a non-abelian version of the d-P$_3$ system from \cite{sakai2001rational} (see page 205). The corresponding difference system is
\begin{gather}
    \tag*{d-P$_3$}
    \label{eq:dP3}
    \left\{
    \begin{array}{rcl}
         q_{n + 1} + q_{n}
         &=& 1 - \alpha_{2, n} p_n^{-1}
         - \alpha_{0, n} (p_n + t)^{-1},
         \\[2mm]
         p_{n} + p_{n - 1}
         &=& - t 
         + \alpha_{1, n} q_n^{-1}
         + \alpha_{3, n} (q_n - 1)^{-1}
         , 
    \end{array}
    \right.
    \\[2mm]
    \begin{aligned}
    \alpha_{0, n} 
    &= \alpha_{0} + n,
    &&&&&
    \alpha_{1, n} 
    &= \alpha_{1} - n,
    &&&&&
    \alpha_{2, n}
    &= \alpha_2 + n,
    &&&&&
    \alpha_{3, n}
    &= \alpha_3 - n
    .
    \end{aligned}
\end{gather}

\medskip
\subsubsection{{$T'$-direction}}
\label{sec:dPD5_2}
The $T'$-operator, $T' = s_3 s_0 s_1 \pi_2 \pi_1$, gives the discrete equations
\begin{align}
    t \, \bar q
    &= t + p - \alpha_3 (q - 1)^{-1} + (1 - \alpha_2) t^{-1} \bar p^{-1},
    &
    t^{-1} \bar p
    &= - q - (\alpha_0 + \alpha_3) \brackets{
    t + p - \alpha_3 (q - 1)^{-1}
    }^{-1},
\end{align}
that can be rewritten in the form 
\begin{align}
    \tag*{\ref{eq:dPD5'}}
    \bar f + f + \LieBrackets{
    g^{-1}, \bar f \, g
    }
    &= - (\alpha_0 + \alpha_2) - t g - \alpha_2 (g - 1)^{-1},
    &
    g \, \bar g
    &= - t^{-1} \bar f (\bar f + \alpha_0) (\bar f - \alpha_3)^{-1}
    ,
\end{align}
where $g := t^{-1} p + 1$ and $\bar f := (p + t) (q - 1)$.
The latter is a non-abelian analog of the d-P$_4$ equation (see page 204 in \cite{sakai2001rational} or eq. (8.236) in \cite{kajiwara2017geometric}). 
\begin{proof}
Indeed, first of all let us make the following observation
\begin{align}
    &&
    t^{-1} \bar p + 1
    &= 1 - q - (\alpha_0 + \alpha_3)
    (q - 1) \brackets{
    (p + t) (q - 1) - \alpha_3
    }^{-1}
    \\[2mm]
    \Leftrightarrow&
    &
    t^{-1} \bar p + 1
    &= - (q - 1) \brackets{
    (p + t) (q - 1) + \alpha_0
    } \brackets{
    (p + t) (q - 1) - \alpha_3
    }^{-1}
    .
    &&&&
\end{align}
This equation suggests us a transformation as above. In order to find the $f$-dynamic, one needs to compute $f = (\underline{p} + t) (\underline{q} - 1)$. Thanks to the $T'$-operator, it is easy to obtain
\begin{align}
    t \underline{q}
    &= - p + (\alpha_1 + \alpha_2) (q + \alpha_2 p^{-1})^{-1}
    ,
    &
    \underline{p} + t
    &= t (q + \alpha_2 p^{-1})
    - (1 - \alpha_3) (\underline{q} - 1)^{-1}
    ,
\end{align}
and then
\begin{align}
    (\underline{p} + t) (\underline{q} - 1)
    &= t (q + \alpha_2 p^{-1}) (\underline{q} - 1)
    - (1 - \alpha_3)
    \\
    &= (q + \alpha_2 p^{-1}) \brackets{
    - p + (\alpha_1 + \alpha_2) (q + \alpha_2 p^{-1})^{-1}
    - t
    }
    - (1 - \alpha_3)
    \\
    &= - q (p + t)
    - \alpha_2 p^{-1} (p + t)
    - \alpha_0
    \\
    &= - (q - 1) (p + t) - (p + t)
    - \alpha_2 - \alpha_2 (t^{-1} p + 1 - 1)^{-1} - \alpha_0
    \\
    &= - (q - 1) (p + t) 
    - t \, g - \alpha_2 (g - 1)^{-1}
    - (\alpha_0 + \alpha_2).
\end{align}
Since
\begin{align}
    (q - 1) (p + t) - (p + t) (q - 1)
    &= (p + t)^{-1} (p + t) (q - 1) (p + t)
    - (p + t) (q - 1) (p + t) (p + t)^{-1} 
    \\
    &= \LieBrackets{
    t^{-1} ( t^{-1} p + 1)^{-1}, 
    (p + t) (q - 1) \, \, t (t^{-1} p + 1)
    }
    = \LieBrackets{
    g^{-1}, \bar f \, g
    },
\end{align}
we arrive at the desired result. 
\end{proof}
The \ref{eq:dPD5'} dynamics can be represented as the system of difference equations. Namely, we have
\begin{gather}
    \tag*{d-P$_4$}
    \label{eq:dP4}
    \left\{
    \begin{array}{rcl}
         f_{n + 1} + f_{n} + \LieBrackets{
         g_n^{-1}, f_{n + 1} \, g_n
         }
         &=& - (\alpha_{0, n} + \alpha_{2, n})
         - t \, g_n - \alpha_{2, n} (g_n - 1)^{-1}
         ,
         \\[2mm]
         g_{n - 1} \, g_{n}
         &=& - t^{-1} 
         f_n (f_n + \alpha_{0, n}) (f_n - \alpha_{3, n} + 1)
         , 
    \end{array}
    \right.
    \\[2mm]
    \begin{aligned}
    \alpha_{0, n} 
    &= \alpha_{0},
    &&&&&
    \alpha_{1, n} 
    &= \alpha_{1},
    &&&&&
    \alpha_{2, n}
    &= \alpha_2 - n,
    &&&&&
    \alpha_{3, n}
    &= \alpha_3 + n
    .
    \end{aligned}
\end{gather}

\subsection{\texorpdfstring{$d$-P$(D_4)$}{dPD4}}
\label{sec:dPD4}
The last case we are going to consider is connected with the P$_6$ system. Its non-abelian analog may be presented as the system \cite{bobrova2022classification}
\begin{align}
    \tag*{P$_6$}
    \label{eq:P6sys}
    &
    \left\{
    \begin{array}{rcl}
         t (t - 1) \, \dot q
         &=& q^2 p q + q p q^2 - 2 q p q 
         + \beta [q, [q, p]]
         + (\alpha_1 + 2 \alpha_2) q^2
         - (\alpha_1 + 2 \alpha_2 + \alpha_3) q
         \\[2mm]
         &&
         \hspace{3.4cm}
         +\, t \, \brackets{
         - 2 q p q + \gamma [q, [q, p]] + q p + p q + (\alpha_3 + \alpha_4) q - \alpha_4
         }
         ,  
         \\[3mm]
         t (t - 1) \, \dot p
         &=& - q p q p - p q^2 p - p q p q
         + 2 p q p + \beta [p, [q, p]]
         - (\alpha_1 + 2 \alpha_2) (q p + p q)
         \\[2mm]
         &&
         + \, (\alpha_1 + 2 \alpha_2 + \alpha_3) p
         - \alpha_2 (\alpha_1 + \alpha_2)
         + t \brackets{
         2 p q p + \gamma [p, [q, p]]
         - p^2 - (\alpha_3 + \alpha_4) p
         }
         .
    \end{array}
    \right.
\end{align}
Here $\beta$, $\gamma$ are additional arbitrary abelian parameters which we will fix below. Let~$\alpha_0 + \alpha_1 + 2 \alpha_2 + \alpha_3 + \alpha_4 = 1$. When $\beta = \gamma = - \tfrac13$, the system has the following Bäcklund transformations (cf. with \cite{bobrova2022non} and \cite{bershtein2023hamiltonian})
\begin{table}[H]
    \centering
    \begin{tabular}{c||ccccc|ccc}
         & $\alpha_0$
         & $\alpha_1$
         & $\alpha_2$
         & $\alpha_3$
         & $\alpha_4$
         & $q$
         & $p$
         & $t$
         \\[1mm]
         \hline\hline
         $s_0$
         & $- \alpha_0$
         & $\alpha_1$
         & $\alpha_2 + \alpha_0$
         & $\alpha_3$
         & $\alpha_4$
         & $q$
         & $\,\, \,p - \alpha_0 (q - t)^{-1\phantom{\frac{1}{}}}$
         & $t$
         \\[2mm]
         $s_1$
         & $\alpha_0$
         & $- \alpha_1$
         & $\alpha_2 + \alpha_1$
         & $\alpha_3$
         & $\alpha_4$
         & $q$
         & $p$
         & $t$
         \\[2mm]
         $s_2$
         & $\alpha_0 + \alpha_2$
         & $\alpha_1 + \alpha_2$
         & $- \alpha_2$
         & $\alpha_3 + \alpha_2$
         & $\alpha_4 + \alpha_2$
         & $q + \alpha_2 p^{-1}$
         & $p$
         & $t$
         \\[2mm]
         $s_3$
         & $\alpha_0$
         & $\alpha_1$
         & $\alpha_2 + \alpha_3$
         & $- \alpha_3$
         & $\alpha_4$
         & $q$
         & $p - \alpha_3 (q - 1)^{-1}$
         & $t$
         \\[2mm]
         $s_4$
         & $\alpha_0$
         & $\alpha_1$
         & $\alpha_2 + \alpha_4$
         & $\alpha_3$
         & $- \alpha_4$
         & $q$
         & $p - \alpha_4 q^{-1}$
         & $t$
         \\[1mm]
         \hline
         $r_1$
         & $\alpha_0$
         & $\alpha_1$
         & $\alpha_2$
         & $\alpha_4$
         & $\alpha_3$
         & $1 - q$
         & $- p$
         & $1 - t$
         \\[2mm]
         $r_2$
         & $\alpha_3$
         & $\alpha_1$
         & $\alpha_2$
         & $\alpha_0$
         & $\alpha_4$
         & $t^{-1} q$
         & $t p$
         & $t^{-1}$
         \\[2mm]
         $r_3$
         & $\alpha_0$
         & $\alpha_4$
         & $\alpha_2$
         & $\alpha_3$
         & $\alpha_1$
         & $q^{-1}$
         & $- q p q - \alpha_2 q$
         & $t^{-1}$
    \end{tabular}
    \caption{BT for the \ref{eq:P6sys} system with $\beta = \gamma = - \frac13$}
    \label{tab:BT_P6}
\end{table}
\begin{rem}
Unlike the paper \cite{bershtein2023hamiltonian}, in the case of the permutations $r_1$, $r_2$, $r_3$, we do not need to conjugate the variables $q$ and $p$ by the element $g = t^{[p, q]}$, thanks to the additional parameters $\beta$, $\gamma$ in the~\ref{eq:P6sys}~system.
\end{rem}

The permutations $r_1$, $r_2$, $r_3$ form a group isomorphic to $S_4$:
\begin{gather}
    \label{eq:S4}
    \begin{gathered}
    S_4
    = \angleBrackets{
    r_1, r_2, r_3
    },
    \\[1mm]
    \begin{aligned}
        r_k^2
        &= 1,
        &&&&&
        (r_2 r_3)^2
        &= 1,
        &&&&&
        (r_1 r_2)^3
        &= 1,
        &&&&&
        (r_1 r_3)^3
        &= 1,
        &&&&&
        k
        &= 1, 2, 3.
    \end{aligned}
    \end{gathered}
\end{gather}
Set $\pi_1 := r_2 r_3$, $\pi_2 := r_1 r_2 r_3 r_1$, $\pi_3 := (r_1 r_2 r_3)^2$. Taking them together with the reflections, we obtain the extended affine Wey group of type $D_4^{(1)}$ 
\begin{gather}
    \label{eq:WD4}
    \begin{gathered}
    \widetilde{W}(D_4^{(1)})
    = \angleBrackets{s_0, s_1, s_2, s_3, s_4; \pi_1, \pi_2, \pi_3},
    \\[1mm]
    \begin{aligned}
    s_i^2
    &= 1,
    &&&&&
    (s_i \, s_{j})^2
    &= 1 \,\,\, (j \neq i \neq 2),
    &&&&&
    (s_i \, s_{2})^3
    &= 1,
    &&&&&
    i, j
    &= 0, 1, 2, 3, 4,
    \end{aligned}
    \\[1mm]
    \begin{aligned}
    \pi_k^2
    &= 1,
    &&&&&
    (\pi_2 \, \pi_3)^2
    &= 1,
    &&&&&
    \pi_k s_i
    &= s_{\sigma_k(i)} \pi_k,
    &&&&&
    k
    &= 1, 2, 3.
    \end{aligned}
    \end{gathered}
\end{gather}
The Cartan matrix and Dynkin diagram are presented below. By these data, one can easily recover the~$\widetilde{W}$-action on the root variables as well as the fundamental relations for its generators.
\begin{table}[H]
    \vspace{-3mm}
    \begin{tabular}{ccc}
    $C
    = 
    \begin{pmatrix}
    2 & 0 & - 1 & 0 & 0 \\[0.9mm] 
    0 & 2 & -1 & 0 & 0 \\[0.9mm]
    - 1 & -1 & 2 & -1 & -1 \\[0.9mm]
    0 & 0 & -1 & 2 & 0 \\[0.9mm]
    0 & 0 & -1 & 0 & 2
    \end{pmatrix}
    $
    & \hspace{2cm} &
    \begin{tabular}{c}
    \\[-5mm]
    \scalebox{0.85}{\tikzset{every picture/.style={line width=0.75pt}} 

\begin{tikzpicture}[x=0.75pt,y=0.75pt,yscale=-1,xscale=1]

\draw  [draw opacity=0][dash pattern={on 0.84pt off 2.51pt}] (78.9,161.2) .. controls (58.06,140.74) and (51.12,108.67) .. (63.92,80.55) .. controls (80.28,44.61) and (122.68,28.73) .. (158.62,45.09) .. controls (194.56,61.45) and (210.43,103.84) .. (194.08,139.78) .. controls (190.52,147.6) and (185.73,154.47) .. (180.04,160.27) -- (129,110.17) -- cycle ; \draw  [dash pattern={on 0.84pt off 2.51pt}] (78.9,161.2) .. controls (58.06,140.74) and (51.12,108.67) .. (63.92,80.55) .. controls (80.28,44.61) and (122.68,28.73) .. (158.62,45.09) .. controls (194.56,61.45) and (210.43,103.84) .. (194.08,139.78) .. controls (190.52,147.6) and (185.73,154.47) .. (180.04,160.27) ;  
\draw    (184.36,155.45) -- (184.19,155.65) ;
\draw [shift={(182.93,157.2)}, rotate = 309.09] [color={rgb, 255:red, 0; green, 0; blue, 0 }  ][line width=0.75]    (6.56,-2.94) .. controls (4.17,-1.38) and (1.99,-0.4) .. (0,0) .. controls (1.99,0.4) and (4.17,1.38) .. (6.56,2.94)   ;
\draw    (73.27,154.71) -- (73.51,155.01) ;
\draw [shift={(74.74,156.58)}, rotate = 231.7] [color={rgb, 255:red, 0; green, 0; blue, 0 }  ][line width=0.75]    (6.56,-2.94) .. controls (4.17,-1.38) and (1.99,-0.4) .. (0,0) .. controls (1.99,0.4) and (4.17,1.38) .. (6.56,2.94)   ;
\draw    (74.34,64.43) ;
\draw [shift={(75.28,63.46)}, rotate = 134.08] [color={rgb, 255:red, 0; green, 0; blue, 0 }  ][line width=0.75]    (6.56,-2.94) .. controls (4.17,-1.38) and (1.99,-0.4) .. (0,0) .. controls (1.99,0.4) and (4.17,1.38) .. (6.56,2.94)   ;

\draw    (183.68,64.47) ;
\draw [shift={(182.89,63.51)}, rotate = 50.39] [color={rgb, 255:red, 0; green, 0; blue, 0 }  ][line width=0.75]    (6.56,-2.94) .. controls (4.17,-1.38) and (1.99,-0.4) .. (0,0) .. controls (1.99,0.4) and (4.17,1.38) .. (6.56,2.94)   ;

\draw    (174.54,55.56) ;
\draw [shift={(175.49,56.35)}, rotate = 219.73] [color={rgb, 255:red, 0; green, 0; blue, 0 }  ][line width=0.75]    (6.56,-2.94) .. controls (4.17,-1.38) and (1.99,-0.4) .. (0,0) .. controls (1.99,0.4) and (4.17,1.38) .. (6.56,2.94)   ;

\draw    (83.54,55.31) ;
\draw [shift={(82.65,56.04)}, rotate = 320.84] [color={rgb, 255:red, 0; green, 0; blue, 0 }  ][line width=0.75]    (6.56,-2.94) .. controls (4.17,-1.38) and (1.99,-0.4) .. (0,0) .. controls (1.99,0.4) and (4.17,1.38) .. (6.56,2.94)   ;

\draw  [fill={rgb, 255:red, 0; green, 0; blue, 0 }  ,fill opacity=1 ] (175.02,59.73) .. controls (175.02,57.29) and (177,55.32) .. (179.44,55.32) .. controls (181.87,55.32) and (183.85,57.29) .. (183.85,59.73) .. controls (183.85,62.17) and (181.87,64.14) .. (179.44,64.14) .. controls (177,64.14) and (175.02,62.17) .. (175.02,59.73) -- cycle ;
\draw  [fill={rgb, 255:red, 0; green, 0; blue, 0 }  ,fill opacity=1 ] (74.15,160.6) .. controls (74.15,158.17) and (76.13,156.19) .. (78.56,156.19) .. controls (81,156.19) and (82.98,158.17) .. (82.98,160.6) .. controls (82.98,163.04) and (81,165.02) .. (78.56,165.02) .. controls (76.13,165.02) and (74.15,163.04) .. (74.15,160.6) -- cycle ;
\draw  [fill={rgb, 255:red, 0; green, 0; blue, 0 }  ,fill opacity=1 ] (175.02,160.6) .. controls (175.02,158.17) and (177,156.19) .. (179.44,156.19) .. controls (181.87,156.19) and (183.85,158.17) .. (183.85,160.6) .. controls (183.85,163.04) and (181.87,165.02) .. (179.44,165.02) .. controls (177,165.02) and (175.02,163.04) .. (175.02,160.6) -- cycle ;
\draw  [fill={rgb, 255:red, 0; green, 0; blue, 0 }  ,fill opacity=1 ] (74.15,59.73) .. controls (74.15,57.29) and (76.13,55.32) .. (78.56,55.32) .. controls (81,55.32) and (82.98,57.29) .. (82.98,59.73) .. controls (82.98,62.17) and (81,64.14) .. (78.56,64.14) .. controls (76.13,64.14) and (74.15,62.17) .. (74.15,59.73) -- cycle ;
\draw    (175.67,63.24) -- (82.33,157.46) ;
\draw    (81.67,63.24) -- (175.67,157.02) ;
\draw  [fill={rgb, 255:red, 0; green, 0; blue, 0 }  ,fill opacity=1 ] (124.59,110.35) .. controls (124.59,107.91) and (126.56,105.94) .. (129,105.94) .. controls (131.44,105.94) and (133.41,107.91) .. (133.41,110.35) .. controls (133.41,112.79) and (131.44,114.76) .. (129,114.76) .. controls (126.56,114.76) and (124.59,112.79) .. (124.59,110.35) -- cycle ;

\draw (46.26,156) node [anchor=north west][inner sep=0.75pt]  [font=\large]  {$\alpha _{0}$};
\draw (193.66,155.97) node [anchor=north west][inner sep=0.75pt]  [font=\large]  {$\alpha _{1}$};
\draw (46.27,55.58) node [anchor=north west][inner sep=0.75pt]  [font=\large,rotate=-0.45]  {$\alpha _{3}$};
\draw (193.67,55.58) node [anchor=north west][inner sep=0.75pt]  [font=\large,rotate=-0.45]  {$\alpha _{4}$};
\draw (120.6,85.08) node [anchor=north west][inner sep=0.75pt]  [font=\large]  {$\alpha _{2}$};
\draw (193.03,103.65) node [anchor=north west][inner sep=0.75pt]  [font=\large,rotate=-90]  {$\pi _{3}$};
\draw (65.61,116.01) node [anchor=north west][inner sep=0.75pt]  [font=\large,rotate=-270.13]  {$\pi _{2}$};
\draw (120.61,48.38) node [anchor=north west][inner sep=0.75pt]  [font=\large,rotate=-0.07]  {$\pi _{1}$};

\end{tikzpicture}}
    \end{tabular}
    \end{tabular}
    \vspace{-5mm}
\end{table}

Proceeding to the discrete dynamic, let us consider the translation operator $T := s_2 s_4 s_1 s_2 s_0 s_3 \pi_2 \pi_3$ that acts on the parameters as
\begin{align}
    T \brackets{
    \alpha_0, \,
    \alpha_1, \,
    \alpha_2, \,
    \alpha_3, \, 
    \alpha_4
    }
    &= \brackets{
    \alpha_0 - 1, \,
    \alpha_1, \,
    \alpha_2 + 1, \,
    \alpha_3 - 1, \, 
    \alpha_4
    }.
\end{align}
The corresponding $q$, $p$ dynamic is
\begin{align}
    &\bar q
    = t \brackets{
    (q + \alpha_2 p^{-1})
    + (\alpha_1 + \alpha_2 + \alpha_4) \brackets{
    p - (\alpha_2 + \alpha_4) (q + \alpha_2 p^{-1})^{-1}
    }^{-1}
    }^{-1},
    \\[2mm]
    &
    \begin{aligned}
    \bar p
    = - (1 + \alpha_2) \bar q^{-1}
    - t \, \bar q^{-1} \brackets{
    p - (\alpha_2 + \alpha_4) (q + \alpha_2 p^{-1})^{-1}
    } \bar q^{-1}
    &+ (\alpha_0 - 1) \, t \, \bar q^{-1} (1 - t \bar q^{-1})^{-1} \bar q^{-1}
    \\[1mm]
    &+ \, (\alpha_3 - 1) \, t \, \bar q^{-1} (t - t \bar q^{-1})^{-1} \bar q^{-1},
    \end{aligned}
\end{align}
or, in terms of $f := q$ and $g := q p$, 
\begin{gather}
    \tag*{\ref{eq:dPD4}}
    \begin{aligned}
        f \, \bar f
        &= t \, g \, (g + \alpha_2)^{-1} \,
        (g - \alpha_4) \, (g + \alpha_1 + \alpha_2)^{-1}
        ,
        \\[1mm]
        \bar g + g 
        + \LieBrackets{f^{-1}, \, g \, f}
        &= (\alpha_0 + \alpha_3 + \alpha_4 - 2)
        + (\alpha_3 - 1) \, (\bar f - 1)^{-1}
        + (\alpha_0 - 1) \, t \, (\bar f - t)^{-1}
        .
    \end{aligned}
\end{gather}
Under the commutative reduction, this dynamic coincides with a discrete \Painleve equation related to the~$D_4^{(1)}$-surface (see page 204 in \cite{sakai2001rational} or eq. (8.227) in \cite{kajiwara2017geometric}) and is also known as the d-P$_5$ equation.
\begin{proof}
Indeed, the $q$-dynamic may be rewritten as follows
\begin{align}
    t^{-1} \bar q
    &= \brackets{
    p - (\alpha_2 + \alpha_4) (q + \alpha_2 p^{-1})^{-1}
    } \,
    \brackets{
    q p + \alpha_1 + \alpha_2
    }^{-1}
    \\[1mm]
    &= 
    (q + \alpha_2 p^{-1})^{-1} \,
    \brackets{
    q p - \alpha_4
    } \,
    \brackets{
    q p + \alpha_1 + \alpha_2
    }^{-1}
    \\[1mm]
    &= p \, (q p + \alpha_2)^{-1} \,
    (q p - \alpha_4) \, 
    (q p + \alpha_1 + \alpha_4)^{-1},
\end{align}
or, equivalently,
\begin{align}
    q \, \bar q
    &= t \, q p \, (q p + \alpha_2)^{-1} \,
    (q p - \alpha_4) \, 
    (q p + \alpha_1 + \alpha_4)^{-1}.
\end{align}
Let us define $g := q \, p$ and compute $\bar g = \bar q \, \bar p$. Note that
\begin{align}
    &&
    t^{-1} \bar q
    &= \brackets{
    p - (\alpha_2 + \alpha_4) (q + \alpha_2 p^{-1})^{-1}
    } \,
    \brackets{
    q p + \alpha_1 + \alpha_2
    }^{-1}
    \\
    \Leftrightarrow&&
    t^{-1} \bar q
    &= \brackets{
    p q + \alpha_1 + \alpha_2
    }^{-1} \, 
    \brackets{
    p - (\alpha_2 + \alpha_4) (q + \alpha_2 p^{-1})^{-1}
    }
    \\
    \Leftrightarrow&&
    p - (\alpha_2 + \alpha_4) (q + \alpha_2 p^{-1})^{-1}
    &= t^{-1} \brackets{
    p q + \alpha_1 + \alpha_2
    } \, \bar q.
\end{align}
Then,
\begin{align}
    \bar q \, \bar p
    &= - (1 + \alpha_2)
    - (p q + \alpha_1 + \alpha_2)
    + (\alpha_0 - 1) \, t \, (\bar q - t)^{-1}
    + (\alpha_3 - 1) \, (\bar q - 1)^{-1}.
\end{align}
Since $p q - q p = q^{-1} \, q p q - q p q \, q^{-1} = \LieBrackets{q^{-1}, q p \, q} = \LieBrackets{q^{-1}, g \, q}$ and $f := q$, we are done.
\end{proof}

A corresponding system of difference equations is given below:
\begin{gather}
    \tag*{d-P$_5$}
    \label{eq:dP5}
    \left\{
    \begin{array}{rcl}
         f_n \, f_{n + 1}
         &=& t \, g_n \, (g_n + \alpha_{2, n})^{-1} \,
         (g_n - \alpha_{4, n}) \, (g_n + \alpha_{1, n} + \alpha_{2, n})^{-1}
         ,
         \\[2mm]
         g_{n} + g_{n - 1}
         + \LieBrackets{f_{n - 1}^{-1}, g_{n - 1} \, f_{n - 1}}
         &=& (\alpha_{0, n} + \alpha_{3, n} + \alpha_{4, n})
         + \alpha_{3, n} \, (f_n - 1)^{-1}
         + \alpha_{0, n} \, t \, (f_n - t)^{-1}
         , 
    \end{array}
    \right.
    \\[2mm]
    \begin{aligned}
    \alpha_{0, n} 
    &= \alpha_{0} - n,
    &&&&&
    \alpha_{1, n} 
    &= \alpha_{1},
    &&&&&
    \alpha_{2, n} 
    &= \alpha_{2} + n,
    &&&&&
    \alpha_{3, n}
    &= \alpha_3 - n,
    &&&&&
    \alpha_{4, n}
    &= \alpha_4
    .
    \end{aligned}
\end{gather}

    \appendix
\section{\texorpdfstring{$d$}{d}-\Painleve equations}
\label{app:dP}
Here the variables $q$, $p$, $f$, $g \in \mathcal{R}$ and all constant parameters labeling by greek letters are from the field $\mathbb{C}$. The~element $t$ is central, except for the \ref{eq:dPE6}, \ref{eq:dPE6'}, and \ref{eq:dPE7} systems.

\subsection{\texorpdfstring{$d$-P$(D_4)$}{dPD4}}
Subsection \ref{sec:dPD4}. Affine Weyl group: $\widetilde{W} (D_4^{(1)}) = \angleBrackets{s_0, s_1, s_2, s_3, s_4; \pi_1, \pi_2, \pi_3}$.

\medskip
\textbullet \,\, $T~=~{s_2 s_4 s_1 s_2 s_0 s_3 \pi_2 \pi_3}$:

\begin{gather}
    \tag*{d-P$(D_4)$}
    \label{eq:dPD4}
    \begin{gathered}
    \begin{aligned}
        \begin{aligned}
        \bar \alpha_0 
        &= \alpha_0 - 1,
        &&&&
        \end{aligned} 
        \bar \alpha_2 
        = \alpha_2 + 1,&
        &&&&&
        &
        \bar \alpha_3 
        = \alpha_3 - 1,
        \\
        \begin{aligned}
        f \, \bar f
        = t \, g \, (g + \alpha_2)^{-1} \,
        (g - \alpha_4) \, (g + \alpha_1 + \alpha_2)^{-1},
        \\[1mm]
        \phantom{f}
        \end{aligned}&
        &&&&&
        &
        \begin{aligned}
        \bar g 
        &+ g
        + \LieBrackets{f^{-1}, \, g \, f}
        = (\alpha_0 + \alpha_3 + \alpha_4 - 2)
        \\[1mm]
        &+ (\alpha_3 - 1) \, (\bar f - 1)^{-1}
        + (\alpha_0 - 1) \, t \, (\bar f - t)^{-1}
        .
        \end{aligned}
    \end{aligned}
    \end{gathered}
\end{gather}

\subsection{\texorpdfstring{$d$-P$(D_5)$}{dPD5}}
Subsection \ref{sec:dPD5}. Affine Weyl group: $\widetilde{W} (A_3^{(1)}) = \angleBrackets{s_0, s_1, s_2, s_3; \pi}$.

\medskip
\textbullet \,\, Subsection \ref{sec:dPD5_1}. $T = (\pi_1 \pi_2)^2 s_0 s_2 s_1 s_3$: 

\begin{flalign}
    \tag*{d-P$(D_5)$}
    \label{eq:dPD5}
    &&
    \begin{gathered}
    \begin{aligned}
        \begin{aligned}
        \bar \alpha_0 
        &= \alpha_0 + 1,
        &&&&
        \end{aligned} 
        \bar \alpha_1 
        = \alpha_1- 1,&
        &&&&&
        &
        \begin{aligned}
        \bar \alpha_2 
        &= \alpha_2 + 1,
        &&&&&
        \bar \alpha_3 
        &= \alpha_3 - 1,
        \end{aligned}
        \\
        \bar q + q 
        = 1 - \alpha_2 p^{-1} - \alpha_0 (p + t)^{-1},&
        &&&&&
        &\bar p + p
        = - t + (\alpha_1 - 1) \bar q^{-1} + (\alpha_3 - 1) (\bar q - 1)^{-1}.
    \end{aligned}
    \end{gathered}
    &&
\end{flalign}

\medskip
\textbullet \,\, Subsection \ref{sec:dPD5_2}. $T' = s_3 s_0 s_1 \pi_2 \pi_1$:

\begin{gather}
    \tag*{d-P$(D_5)'$}
    \label{eq:dPD5'}
    \begin{gathered}
    \begin{aligned}
        \begin{aligned}
        \bar \alpha_2 
        &= \alpha_2 - 1,
        &&&&
        \end{aligned} 
        \bar \alpha_3 
        = \alpha_3 + 1,&
        &&&&&
        &
        \\
        \bar f + f + \LieBrackets{
        g^{-1}, \bar f \, g
        }   
        = - (\alpha_0 + \alpha_2) - t g - \alpha_2 (g - 1)^{-1},&
        &&&&&
        &g \, \bar g
        = - t^{-1} \bar f (\bar f + \alpha_0) (\bar f - \alpha_3)^{-1}.
    \end{aligned}
    \end{gathered}
\end{gather}

\subsection{\texorpdfstring{$d$-P$(D_6)$}{dPD6}}
Subsection \ref{sec:dPD6}. Affine Weyl group: $\widetilde{W} (2 A_1^{(1)}) = \angleBrackets{s_0, s_1, s_0', s_1'; \tilde \pi := \sigma \pi' \pi \sigma}$.

\medskip
\textbullet \,\, Subsection \ref{sec:dPD6_1}. $T = \pi' \sigma s_1 \sigma$: 
\begin{gather}
    \tag*{d-P$(D_6)$}
    \label{eq:dPD6}
    \begin{gathered}
    \begin{aligned}
        \bar \alpha_1
        = \alpha_1,&
        &&&&&
        &
        \bar \beta_1 
        = \beta_1 - 1,
        \\
        \bar f \, f
        = t + \bar \beta_1 \, t \, \bar g^{-1},&
        &&&&&
        &\bar g
        + g
        + [f \, \bar g, f^{-1}]
        &= \alpha_1 - \beta_1 + f + t f^{-1}.
    \end{aligned}
    \end{gathered}
\end{gather}

\medskip
\textbullet \,\, Subsection \ref{sec:dPD6_2}. $T' = (\sigma s_1)^2 \pi' \pi$:

\begin{gather}
    \tag*{d-P$(D_6)'$}
    \label{eq:dPD6'}
    \begin{gathered}
    \begin{aligned}
        \begin{aligned}
        \bar \alpha_0 
        &= \alpha_0 - 1,
        &&&&
        \end{aligned} 
        \bar \alpha_1 
        = \alpha_1 + 1,&
        &&&&&
        &
        \begin{aligned}
        \bar \beta_0 
        &= \beta_0 - 1,
        &&&&&
        \bar \beta_1 
        &= \beta_1 + 1,
        \end{aligned}
        \\
        \bar q + q
        = - \alpha_1 p^{-1} + \beta_1 (1 - p)^{-1},&
        &&&&&
        &\bar p
        + p
        = 1 - (\bar \alpha_1 + \beta_1) \bar q^{-1} - t \bar q^{-2}.
    \end{aligned}
    \end{gathered}
\end{gather}

\subsection{\texorpdfstring{$d$-P$(D_7)$}{dPD7}}
Subsection \ref{sec:dPD7}. Affine Weyl group and translation operator: $\widetilde{W} (A_1^{(1)}) = \angleBrackets{s_0, s_1; \pi}$, $T = \pi s_1$.

\medskip
\begin{gather}
    \tag*{d-P$(D_7)$}
    \label{eq:dPD7}
    \begin{gathered}
    \begin{aligned}
        \bar \alpha_0 
        = \alpha_0 - 1,&
        &&&&&
        &\bar \alpha_1 
        = \alpha_1 + 1,
        \\
        \bar f + f 
        + \LieBrackets{g \, \bar f, g^{-1}}
        = - \alpha_1 - t g^{-1},&
        &&&&&
        &\bar g \, g
        = t \bar f.
    \end{aligned}
    \end{gathered} 
\end{gather}

\subsection{\texorpdfstring{$d$-P$(E_6)$}{dPE6}}
Subsection \ref{sec:dPE6}. Affine Weyl group: $\widetilde{W} (A_2^{(1)}) = \angleBrackets{s_0, s_1, s_2; \pi}$.

\medskip
\textbullet \,\, Subsection \ref{sec:dPE6_1}. $T = s_1 \pi s_2$: 

\begin{gather}
    \tag*{d-P$(E_6)$}
    \label{eq:dPE6}
    \begin{gathered}
    \begin{aligned}
        \bar \alpha_1 
        = \alpha_1 + 1,&
        &&&&&
        &\bar \alpha_2 
        = \alpha_2 - 1,
        \\
        \bar q
        + q 
        = - t + \bar p - (\alpha_2 - 1) {\bar{p}}^{-1},&
        &&&&&
        &\bar p
        + p 
        = t + q + \alpha_1 q^{-1}.
    \end{aligned}
    \end{gathered}
\end{gather}

\medskip
\textbullet \,\, Subsection \ref{sec:dPE6_2}. $T' = s_2 s_1 \pi$:

\begin{gather}
    \tag*{d-P$(E_6)'$}
    \label{eq:dPE6'}
    \begin{gathered}
    \begin{aligned}
        \bar \alpha_0 
        = \alpha_0 - 1,&
        &&&&&
        &\bar \alpha_2 
        = \alpha_2 + 1,
        \\
        \bar f \, f
        = - (\bar g - \alpha_1) (\bar g + \alpha_2)^{-1} \bar g,&
        &&&&&
        &\bar g
        + g
        + \LieBrackets{f \, \bar g, f^{-1}}
        = \alpha_1
        + f t + f^2.
    \end{aligned}
    \end{gathered}
\end{gather}

\subsection{\texorpdfstring{$d$-P$(E_7)$}{dPE7}}
Subsection \ref{sec:dPE7}. Affine Weyl group and translation operator: $\widetilde{W} (A_1^{(1)}) = \angleBrackets{s_0, s_1; \pi}$, $T = \pi s_1$.

\medskip
\begin{gather}
    \tag*{d-P$(E_7)$}
    \label{eq:dPE7}
    \begin{gathered}
    \begin{aligned}
        \bar \alpha_0 
        = \alpha_0 - 1,&
        &&&&&
        &\bar \alpha_1 
        = \alpha_1 + 1,
        \\
        \bar q
        + q 
        = - \alpha_1 p^{-1},&
        &&&&&
        &\bar p
        + p 
        = t + 2 {\bar{q}}^2.
    \end{aligned}
    \end{gathered}
\end{gather}
\section{Degeneration data}
\label{app:deg_data}

Here capital letters correspond to the lower equation, while $\varepsilon$ is a small parameter. The scheme is as below
\begin{figure}[H]
    \centering
    \scalebox{1.}{\tikzset{every picture/.style={line width=0.75pt}} 

\begin{tikzpicture}[x=0.75pt,y=0.75pt,yscale=-1,xscale=1]

\draw    (290.63,69.79) -- (320.13,59.01) ;
\draw [shift={(322.01,58.33)}, rotate = 159.93] [color={rgb, 255:red, 0; green, 0; blue, 0 }  ][line width=0.75]    (10.93,-3.29) .. controls (6.95,-1.4) and (3.31,-0.3) .. (0,0) .. controls (3.31,0.3) and (6.95,1.4) .. (10.93,3.29)   ;
\draw    (387.63,117.79) -- (416.34,104.4) ;
\draw [shift={(418.15,103.56)}, rotate = 155] [color={rgb, 255:red, 0; green, 0; blue, 0 }  ][line width=0.75]    (10.93,-3.29) .. controls (6.95,-1.4) and (3.31,-0.3) .. (0,0) .. controls (3.31,0.3) and (6.95,1.4) .. (10.93,3.29)   ;
\draw    (290.63,100.79) -- (320.05,115.05) ;
\draw [shift={(321.85,115.92)}, rotate = 205.86] [color={rgb, 255:red, 0; green, 0; blue, 0 }  ][line width=0.75]    (10.93,-3.29) .. controls (6.95,-1.4) and (3.31,-0.3) .. (0,0) .. controls (3.31,0.3) and (6.95,1.4) .. (10.93,3.29)   ;
\draw    (388.63,57.79) -- (418.05,72.05) ;
\draw [shift={(419.85,72.92)}, rotate = 205.86] [color={rgb, 255:red, 0; green, 0; blue, 0 }  ][line width=0.75]    (10.93,-3.29) .. controls (6.95,-1.4) and (3.31,-0.3) .. (0,0) .. controls (3.31,0.3) and (6.95,1.4) .. (10.93,3.29)   ;
\draw    (389.78,50.63) -- (417.01,50.63) ;
\draw [shift={(419.01,50.63)}, rotate = 180] [color={rgb, 255:red, 0; green, 0; blue, 0 }  ][line width=0.75]    (10.93,-3.29) .. controls (6.95,-1.4) and (3.31,-0.3) .. (0,0) .. controls (3.31,0.3) and (6.95,1.4) .. (10.93,3.29)   ;
\draw    (487.63,57.99) -- (517.05,72.25) ;
\draw [shift={(518.85,73.12)}, rotate = 205.86] [color={rgb, 255:red, 0; green, 0; blue, 0 }  ][line width=0.75]    (10.93,-3.29) .. controls (6.95,-1.4) and (3.31,-0.3) .. (0,0) .. controls (3.31,0.3) and (6.95,1.4) .. (10.93,3.29)   ;
\draw    (488.78,50.83) -- (516.01,50.83) ;
\draw [shift={(518.01,50.83)}, rotate = 180] [color={rgb, 255:red, 0; green, 0; blue, 0 }  ][line width=0.75]    (10.93,-3.29) .. controls (6.95,-1.4) and (3.31,-0.3) .. (0,0) .. controls (3.31,0.3) and (6.95,1.4) .. (10.93,3.29)   ;
\draw    (488.8,84.58) -- (516.03,84.58) ;
\draw [shift={(518.03,84.58)}, rotate = 180] [color={rgb, 255:red, 0; green, 0; blue, 0 }  ][line width=0.75]    (10.93,-3.29) .. controls (6.95,-1.4) and (3.31,-0.3) .. (0,0) .. controls (3.31,0.3) and (6.95,1.4) .. (10.93,3.29)   ;

\draw (329.13,43.37) node [anchor=north west][inner sep=0.75pt]    {\ref{eq:dPD5'}};
\draw (331.1,113.73) node [anchor=north west][inner sep=0.75pt]    {\ref{eq:dPD5}};
\draw (428.1,78.13) node [anchor=north west][inner sep=0.75pt]    {\ref{eq:dPE6}};
\draw (428.1,43.4) node [anchor=north west][inner sep=0.75pt]    {\ref{eq:dPD6}};
\draw (236.35,78.12) node [anchor=north west][inner sep=0.75pt]    {\ref{eq:dPD4}};
\draw (527.1,78.33) node [anchor=north west][inner sep=0.75pt]    {\ref{eq:dPE7}};
\draw (527.1,43.6) node [anchor=north west][inner sep=0.75pt]    {\ref{eq:dPD7}};

\end{tikzpicture}}
\end{figure}

\subsection{\ref{eq:dPD4} $\to \{ \text{\ref{eq:dPD5}}, \, \text{\ref{eq:dPD5'}} \}$}
\phantom{}
\begin{itemize}
\item[\textbullet]
\ref{eq:dPD4} $\to$ \ref{eq:dPD5}:
(see eq. (8.127) in \cite{kajiwara2017geometric})
\begin{gather}
    \begin{aligned}
    f
    &= 1 + \varepsilon T \, Q,
    &&&&&
    g
    &= \varepsilon^{-1} \, T^{-1} \, P,
    &&&&&
    t
    &= 1 + \varepsilon \, T,
    \end{aligned}
    \\[1mm]
    \begin{aligned}
    \alpha_0
    &= \mathrm{A}_3,
    &&&&&
    \alpha_1
    &= - \varepsilon^{-1} + \mathrm{A}_2,
    &&&&&
    \alpha_2
    &= \varepsilon^{-1},
    &&&&&
    \alpha_3 
    &= \mathrm{A}_1,
    &&&&&
    \alpha_4
    &= - \varepsilon^{-1} + \mathrm{A}_0.
    \end{aligned}
\end{gather}

\item[\textbullet]
\ref{eq:dPD4} $\to$ \ref{eq:dPD5'}:
\begin{gather}
    \begin{aligned}
    f
    &= \varepsilon \, T \, G,
    &&&&&
    g
    &= \varepsilon \, F,
    &&&&&
    t
    &= \varepsilon \, T,
    \end{aligned}
    \\[1mm]
    \begin{aligned}
    \alpha_0
    &= 1 - \varepsilon \, \mathrm{A}_2,
    &&&&&
    \alpha_1
    &= 1 + \varepsilon,
    &&&&&
    \alpha_2
    &= - 1 - \varepsilon \, (1 + \mathrm{A}_3),
    &&&&&
    \alpha_3 
    &= 2,
    &&&&&
    \alpha_4
    &= - \varepsilon \, \mathrm{A}_0.
    \end{aligned}
\end{gather}
\end{itemize}

\subsection{$\{ \text{\ref{eq:dPD5}}, \, \text{\ref{eq:dPD5'}} \} \to \text{\ref{eq:dPE6}}$}
\phantom{}
\begin{itemize}
\item[\textbullet]
\ref{eq:dPD5} $\to$ \ref{eq:dPE6}: 
(see eq. (8.128) in \cite{kajiwara2017geometric})
\begin{gather}
    \begin{aligned}
    q
    &= \varepsilon \, Q,
    &&&&&
    p
    &= \varepsilon^{-2} + \varepsilon^{-1} (P - T)
    ,
    &&&&&
    t
    &= \varepsilon^{-1} T - \varepsilon^{-2}
    ,
    \end{aligned}
    \\[1mm]
    \begin{aligned}
    \alpha_0
    &= \mathrm{A}_1,
    &&&&&
    \alpha_1
    &= \mathrm{A}_2,
    &&&&&
    \alpha_2
    &= \varepsilon^{-2},
    &&&&&
    \alpha_3 
    &= - \varepsilon^{-2}.
    \end{aligned}
\end{gather}

\item[\textbullet]
\ref{eq:dPD5'} $\to$ \ref{eq:dPE6}:
(cf. \cite{grammaticos1998degeneration})
\begin{gather}
    \begin{aligned}
    f
    &= 1 + \varepsilon \, Q,
    &&&&&
    g
    &= 1 + \varepsilon\, P
    ,
    &&&&&
    t
    &= - 1 + \varepsilon \, T
    ,
    \end{aligned}
    \\[1mm]
    \begin{aligned}
    \alpha_0
    &= - 1 + \varepsilon^2 \, \mathrm{A}_1,
    &&&&&
    \alpha_1
    &= \mathrm{A}_1,
    &&&&&
    \alpha_2
    &= \varepsilon^{2} \, \mathrm{A}_2,
    &&&&&
    \alpha_3 
    &= 1.
    \end{aligned}
\end{gather}

\end{itemize}

\subsection{$\text{\ref{eq:dPD5'}} \to \text{\ref{eq:dPD6}}$}
\phantom{}
\begin{itemize}
\item[\textbullet]
\ref{eq:dPD5'} $\to$ \ref{eq:dPD6}: 
(cf. \cite{grammaticos1998degeneration})
\begin{gather}
    \begin{aligned}
    f
    &= - \mathrm{B}_1^2 \, \varepsilon^{-1} \, F,
    &&&&&
    g
    &= 1 + \mathrm{B}_1^{-1} \, G
    ,
    &&&&&
    t
    &= \mathrm{B}_1^{-3} \, \varepsilon
    ,
    \end{aligned}
    \\[1mm]
    \begin{aligned}
    \alpha_0
    &= - 2 - \mathrm{B}_1^{-1} (\mathrm{A}_1 - \mathrm{B}_1)
    - \mathrm{B}_1 \, \varepsilon^{-1} \, T
    ,
    &&&&&
    \alpha_2
    &= \varepsilon^{-1} \, \mathrm{B}_1 \, T,
    &&&&&
    \alpha_3 
    &= 1.
    \end{aligned}
\end{gather}
\end{itemize}

\subsection{$\text{\ref{eq:dPD6}} \to \{ \text{\ref{eq:dPD7}}, \text{\ref{eq:dPE7}} \}$}
\phantom{}
\begin{itemize}

\item[\textbullet]
\ref{eq:dPD6} $\to$ \ref{eq:dPD7}:
\begin{align}
    f
    &= - \varepsilon^2 \, G,
    &
    g
    &= 1 + \varepsilon \, F
    ,
    &
    t
    &= \varepsilon^3 \, T
    ,
    &
    \alpha_1
    &= - \varepsilon \, \rm{A}_1,
    &
    \beta_1
    &= - 2.
\end{align}

\item[\textbullet]
\ref{eq:dPD6} $\to$ \ref{eq:dPE7}:
(cf. \cite{grammaticos1998degeneration})
\begin{align}
    f
    &= 1 + \sqrt{2} \, \varepsilon \, Q,
    &
    g
    &= \varepsilon^2 \, P
    ,
    &
    t
    &= 1
    ,
    &
    \alpha_1
    &= - 3 + \varepsilon^2 \, T,
    &
    \beta_1
    &= -1 - \sqrt{2} \varepsilon^{3} \, \rm{A}_1.
\end{align}
\end{itemize}

\subsection{$\text{\ref{eq:dPE6}} \to$ \ref{eq:dPE7}}
\phantom{}
\begin{itemize}
\item[\textbullet]
\ref{eq:dPE6} to \ref{eq:dPE7}: (cf. \cite{grammaticos1998degeneration})
\begin{align}
    q
    &= 1 + \sqrt{2} \varepsilon \, Q,
    &
    p
    &= \varepsilon^2 \, P,
    &
    t
    &= - 2 + \varepsilon^2 \, T,
    &
    \alpha_1
    &= 1,
    &
    \alpha_2
    &= 1 + \sqrt{2} \varepsilon^3 \, \mathrm{A}_1
    .
\end{align}
\end{itemize}

    \bibliographystyle{alpha}
    \bibliography{bib}
\end{document}